\providecommand{\tabularnewline}{\\}
\theoremstyle{definition}
\newtheorem{defn}{\protect\definitionname}
\theoremstyle{plain}
\newtheorem{assumption}{\protect\assumptionname}
\theoremstyle{remark}
\newtheorem{rem}{\protect\remarkname}
\theoremstyle{definition}
\newtheorem{condition}{\protect\conditionname}
\theoremstyle{plain}
\newtheorem{prop}{\protect\propositionname}
\providecommand{\assumptionname}{Assumption}
\providecommand{\conditionname}{Condition}
\providecommand{\definitionname}{Definition}
\providecommand{\propositionname}{Proposition}
\providecommand{\remarkname}{Remark}
\begin{document}
\begin{doublespace}
\begin{center}
\textbf{\Large{}Options as Silver Bullets: Valuation of Term Loans,
Inventory Management, Emissions Trading and Insurance Risk Mitigation
using Option Theory}{\Large\par}
\par\end{center}

\begin{center}
\textbf{Ravi Kashyap (ravi.kashyap@stern.nyu.edu)}\footnote{Dr. Yong Wang, Dr. Isabel Yan, Dr. Vikas Kakkar, Dr. Fred Kwan, Dr.
William Case, Dr. Srikant Marakani, Dr. Qiang Zhang, Dr. Costel Andonie,
Dr. Jeff Hong, Dr. Guangwu Liu, Dr. Humphrey Tung and Dr. Xu Han at
the City University of Hong Kong; the editorial board, anonymous reviewers
and numerous seminar participants, particularly at a few meetings
of the econometric society and various finance organizations, provided
valuable suggestions to improve this paper. The views and opinions
expressed in this article, along with any mistakes, are mine alone
and do not necessarily reflect the official policy or position of
either of my affiliations or any other agency.}
\par\end{center}

\begin{center}
\textbf{Estonian Business School, Tallin, Estonia / Formation Fi,
Hong Kong / City University of Hong Kong, Hong Kong}
\par\end{center}

\begin{center}
Keywords: Securities Lending; Term Loan; Derivative Theory; Model
Errors; Inventory Management; Emissions Trading; Financial Stability;
Uncertainty; Information Systems
\par\end{center}

\begin{center}
JEL Codes: G11 Investment Decisions; G13 Contingent Pricing; G17 Financial
Forecasting and Simulation; Q5 Environmental Economics; O33 Technological
Change: Choices and Consequences
\par\end{center}

\begin{center}
AMS Subject Codes: 91G20 Derivative securities; 90B05 Inventory; 60G25
Prediction theory; 91B76 Environmental economics; 68U35 Computing
methodologies for information systems
\par\end{center}

\begin{center}
\begin{center}
\today
\par\end{center}
\par\end{center}

\begin{center}
\textbf{\textcolor{blue}{\href{https://doi.org/10.1007/s10479-022-04610-w}{Edited Version: Kashyap, R. (2022).  Options as Silver Bullets: Valuation of Term Loans, Inventory Management, Emissions Trading and Insurance Risk Mitigation using Option Theory.   Annals of Operations Research,  XX (S.I.: Business Analytics and Operations Research),  001-041.}}}\tableofcontents{}
\par\end{center}
\end{doublespace}
\begin{doublespace}

\section{\quad Abstract}
\end{doublespace}

\begin{doublespace}
Models to price long term loans in the securities lending business
are developed. These longer horizon deals can be viewed as contracts
with optionality embedded in them. This insight leads to the usage
of established methods from derivatives theory to price such contracts.
Numerical simulations are used to demonstrate the practical applicability
of these models. The techniques advanced here can lead to greater
synergies between the management of derivative and delta-one trading
desks, perhaps even being able to combine certain aspects of the day
to day operations of these seemingly disparate entities. These models
are part of one of the least explored, yet profit laden, areas of
modern investment management.

A heuristic is developed to mitigate any loss of information, which
might set in when parameters are estimated first and then the valuations
are performed, by directly calculating valuations using the historical
time series. This approach to valuations can lead to reduced models
errors, robust estimation systems, greater financial stability and
economic strength. An illustration is provided regarding how the methodologies
developed here could be useful for inventory management, emissions
trading and insurance risk mitigation. All these techniques could
have applications for dealing with other financial instruments, non-financial
commodities and many forms of uncertainty.
\end{doublespace}
\begin{doublespace}

\section{\label{sec:Our-Innovations-and}Introduction}
\end{doublespace}

\begin{doublespace}
The bulk of the existing studies on securities lending primarily focus
on the belief that activity in the securities lending markets can
be used to predict future security returns. Many existing studies
develop theoretical models and empirically test the corresponding
concepts on different public and proprietary data-sets. A quick survey
of existing studies on securities lending (Section \ref{sec:Fundamentals-and-Related})
makes it clear that there is hardly any paper that considers the motivations
of the main players, the actions that arise due to the incentives
the participants face and the impact of these actions on the securities
lending market. 

In this paper and related works, (Kashyap 2016a; 2016b) we attempt
to bridge this gap by deriving various results that consider the incentive
structure and the modus operandi of the players in the lending business.
Section (\ref{sec:Fundamentals-and-Related}) has a detailed review
of the literature. The references in this introductory section mainly
serve as a guide to familiarize the reader with this niche area of
finance and also seek to position this paper among other works in
this realm. Kashyap (2016a) has a recent and comprehensive coverage
of the literature on short selling and the stock loan space including
a background on securities lending. D’Avolio (2002); Jones \& Lamont
(2002); Duffie, Garleanu \& Pedersen (2002) have more details on the
historical evolution and the mechanics of the securities lending market.

Kashyap (2016a) has a detailed discussion of the organization of the
securities lending industry. The motivation of the main players and
the actions they undertake, including some new innovations, that could
lead to increased profitability are considered. Kashyap (2016a) also
looks at how to either design an appropriate securities lending exclusive
auction mechanism or to come up with a strategy for placing auction
bids, depending on which side of the fence a participant sits. These
two facets are dependent on whether the interest is: 1) to procure
the rights to use a portfolio for making stock loans such as for a
lending desk, or, 2) to obtain additional revenue from a portfolio
such as from the point of view of a long only asset management firm.
Kashyap (2016b) looks at a few other recent innovations being used
by lending desks towards establishing theoretical borrow rates, the
allocation of inventory to clients and estimating which securities
are likely to become harder to borrow. These two papers (Kashyap 2016a;
2016b) and the present paper are in a niche area of finance related
to securities lending. The contributions from these papers provide
a suite of methodologies that have wide application across the entire
spectrum of financial market participants. The corresponding insights
can be useful for both buy side and sell side institutions with potential
uses outside the financial landscape.

The following are some of the key contributions of this present paper
to the practice of investment management and to the wider set of tools
and methodologies in economics, finance and decision theory. Whenever
it is applicable, throughout the article, most of our results are
supplemented with practical considerations that can be operationally
useful on a daily basis. 
\end{doublespace}
\begin{enumerate}
\begin{doublespace}
\item We illustrate a novel application of derivative theory by considering
in detail the problem of long-term loans in the securities lending
business. We develop models to determine the loan rates on long term
loans. Section (\ref{subsec:Term-Loans-and}) has a discussion of
the fundamentals regarding how a term loan is structured and the need
for such a contract. Sections (\ref{subsec:Benchmark-Valuations};
\ref{subsec:Extremely-Exotic-Extensions}) have the formal theoretical
development.
\item We show that longer horizon deals can be viewed as contracts with
optionality embedded in them and priced using established methods
from option theory. This becomes, to our limited knowledge, the first
application that can lead to greater synergies between the operations
of derivative and delta-one trading desks\footnote{Delta one products are financial derivatives that have no optionality
and as such have a delta of (or very close to) one – meaning that
for a given instantaneous move in the price of the underlying asset
there is expected to be an identical move in the price of the derivative.
(\href{https://en.wikipedia.org/wiki/Delta_one}{Delta One,  Wikipedia Link})}. These two desks are usually organized as distinct business units
in most investment firms. The associated synergies could lead to the
direct use of risk management software, related tools and procedures,
and even personnel from derivative trading desks for delta-one desks.
This could perhaps even lead to combining certain aspects of the day
to day operations of these two seemingly disparate entities. Consolidation
of the information systems of these business units would bring down
costs significantly and also reduce the burden on personnel in terms
of having to use multiple information systems. Not to mention, additional
and diverse revenue sources are generated, which could be shared across
these desks. All of this will reduce business risks in various forms
including market risk, technology risk, information security risk,
among others. 
\item We run numerical simulations to demonstrate the practical applicability
of these models. Section (\ref{subsec:Sample-Data-Generation}) has
a discussion of how we generate the simulation data set and how we
endeavor to stay as close as possible to real data, which is difficult
to obtain. The complete real data set is only available to intermediaries
and most other market participants will have access to only some portion
of the dataset . Sections (\ref{subsec:Closed-Form-Benchmark}; \ref{subsec:Valuation-Matrix})
discuss the results of the simulations. Many well known properties
of option valuations are observed in the results, which confirm the
soundness of our techniques and the application of derivatives theory
for term loans. 
\item We develop a heuristic that can mitigate the loss of information that
sets in, when parameters are estimated first and then the valuation
is performed, by directly calculating the valuation using the historical
time series (Section \ref{sec:Lost-in-Estimation:}). To address and
mitigate the many recent concerns about economic and financial stability,
sophisticated models are being created and their parameters estimated
using historical data. By first doing an estimation and then using
the estimates for final calculations, we are potentially introducing
multiple levels of errors. Our technique is a way to perform calculations
directly using the historical time series. This ensures that not much
is lost in translation due to the need to first estimating the parameters
of a model or probability distribution and then using those to perform
model calculations. This can lead to reduced models errors and greater
financial / economic stability\footnote{\begin{doublespace}
\label{fn:We-summarize-current}We summarize current views on the
size and stability of the securities lending space below:
\end{doublespace}
\begin{enumerate}
\item The value of available inventory as of June 22, 2015, stands at \$13.22
trillion, according to a new info-graphic on the global securities
finance market from DataLend. Of the available inventory worldwide,
\$1.72 trillion was out on loan. The value of equity on loan was \$851
billion, while fixed income on loan stood at \$876 billion. Some 41,673
unique securities were out loan, according to the info-graphic, yielding
an estimated gross revenue of \$19.2 million per day on average, which
equates to \$2.26 billion for the first half of 2015. The US is still
the largest market with \$954 billion out on loan as of 22 June. Canada
is the closest market in size, with an estimated \$131 billion of
securities out on loan. Despite its size, the US commands a fee of
38 basis points (volume-weighted average, year to date), whereas Hong
Kong, which has \$28.8 billion out on loan, yields fees of 210 basis
points. (\href{http://www.securitieslendingtimes.com/securitieslendingnews/article.php?article_id=220006\#.VmahXL_O2iz}{Securites Lending Three})
\item As the potential risks of securities lending are discussed and debated
by the Financial Stability Oversight Council (FSOC), the U.S. Treasury’s
Office of Financial Research (OFR), and the Financial Stability Board
(FSB), it is important to try to understand both the overall size
of the securities lending market and the share of it attributable
to different participants. Based on one estimate from the FSOC the
percentage is typically around these values (Retirement and Pension,
Mutual Funds, Endowments, Insurance: 50\%, 35\%, 8\%, 6\%). (\href{https://www.ici.org/viewpoints/ci.view_14_sec_lending_02.print}{Securites Lending Four}).
\end{enumerate}
}.
\item We show how the techniques developed here could be potentially useful
for supply chain management (Section \ref{subsec:Inventory-Management}),
emissions trading (Section \ref{sec:Applications-to-Emissions}) and
insurance risk mitigation (Section \ref{subsec:Insurance-Risk-Mitigation}).
With globalization emerging as a permanent fixture of modern business,
our methodologies can be useful for negotiating contracts with international
partners, for making investment decisions and to promote socially
responsible behavior.
\item The immediate beneficiaries of the term loan techniques would be the
securities lending desks of sell side firms, since it would provide
them a theoretical basis for structuring term loans. Buy side firms
would benefit by entering into such contracts with sell side firms
and being able to lock down in advance what their cost of borrowing
would be for strategies that require short sales. Section (\ref{subsec:Buy-Side-Sell-Side-Perspective})
has a more detailed discussion on the benefits for the participants
involved.
\item These models are part of one of the least explored, yet profit laden,
areas of modern investment management. The next generation of models
and empirical work on securities lending activity would benefit by
factoring in the methodologies considered here. Sections (\ref{sec:Improvements-to-the};
\ref{sec:Conclusion}) suggest improvements and conclude.
\item To our limited knowledge, this is the first known instance of such
an application of options theory in the securities lending space,
for inventory management and the direct use of the historical time
series for model calculations.
\item In addition, the methodologies we have advanced have numerous applications
towards dealing with, and risk managing, several types of financial
instruments, non-financial commodities and many forms of uncertainty.
Hence, we could view the principal tools we have proposed, built using
option theory, as silver bullets for combating uncertainty and to
aid improved decision making.
\end{doublespace}
\end{enumerate}
\begin{doublespace}

\section{\label{sec:Fundamentals-and-Related}Fundamentals and Related Literature}
\end{doublespace}

\begin{doublespace}
While our study provides a direct application of option theory to
securities lending, there are many studies that consider the implicit
links between options prices and the short selling market. Evans et
al. (2009) examine short-selling constraints when options trade on
the underlying stock. Options market makers are effectively allowed
to sell short without borrowing the stock. By looking at the transactions
of a major options market maker, they find that in most hard-to-borrow
situations, the market maker, chooses not to borrow and instead fails
to deliver stock to its buyers. Battalio \& Schultz (2006) find no
evidence from apparent arbitrage opportunities that short-sale restrictions
prevented investors from shorting Internet stocks in the 1999 to 2000
period, using intra-day options data. They also show that investors
could have easily shorted stock synthetically by purchasing puts and
writing calls; investors can expect to receive almost as much from
a synthetic short sale as from an actual short. Battalio \& Schultz
(2011) examine how the September 2008 short sale restrictions impacted
equity option markets. They find that for options on banned stocks,
the trading costs (bid-ask spreads) increased dramatically. In addition
during the ban, synthetic share prices become significantly lower
than actual share prices, for banned stocks. They find similar results
for synthetic share prices of hard-to-borrow stocks, suggesting that
the dislocation in actual and synthetic share prices is attributable
to the increased hedging costs for options on banned stocks during
the short sale ban.
\end{doublespace}
\begin{doublespace}

\subsection{\label{subsec:Term-Loans-and}Term Loans and Optionality}
\end{doublespace}

\begin{doublespace}
Figure (\ref{fig:Securities-Lending-Term-Loan-Structure}) is a typical
term loan structure showing how an intermediary (securities lending
desk) sits between an inventory supplier (long only asset manager)
and final end borrowers (hedge funds, derivative traders, market makers,
etc.) who have short positions. The first portion of the figure (near
the circle marked one) shows the term loan contract arranged between
the intermediary and the end borrowers. Unlike regular stock loans,
term loans, as the name indicates, are decided for a fixed term when
the contract is initiated. Such long term loans can be arranged for
three to six months (other durations are also possible for some securities)
and the fees are fixed when the loan is made. The second half of the
figure (near the circle marked two) shows the intermediary sourcing
inventory from suppliers to make loans to the final end borrowers.
The final end borrowers pay a fixed fee for the term loan, unlike
regular loans for which the intermediary can change the fees on a
daily basis. Section (\ref{subsec:Buy-Side-Sell-Side-Perspective})
has details on the motivation for participating in such a long term
contract. Section (\ref{sec:Rainbows-and-Baskets}) considers the
valuation methodologies, including many alternate structures that
provide flexibility in terms of the amount of shares transacted and
the fees that apply, to arrive at a basis point estimate including
various assumptions that would be realistic from a securities lending
point of view. Definition (\ref{Defn:A-long-term}) follows from the
above discussion. We want to clarify that this definition is an adaptation
of the word loan, which is common in financial circles, to the specific
case of securities lending. 

\begin{figure}[H]
\includegraphics[width=17cm]{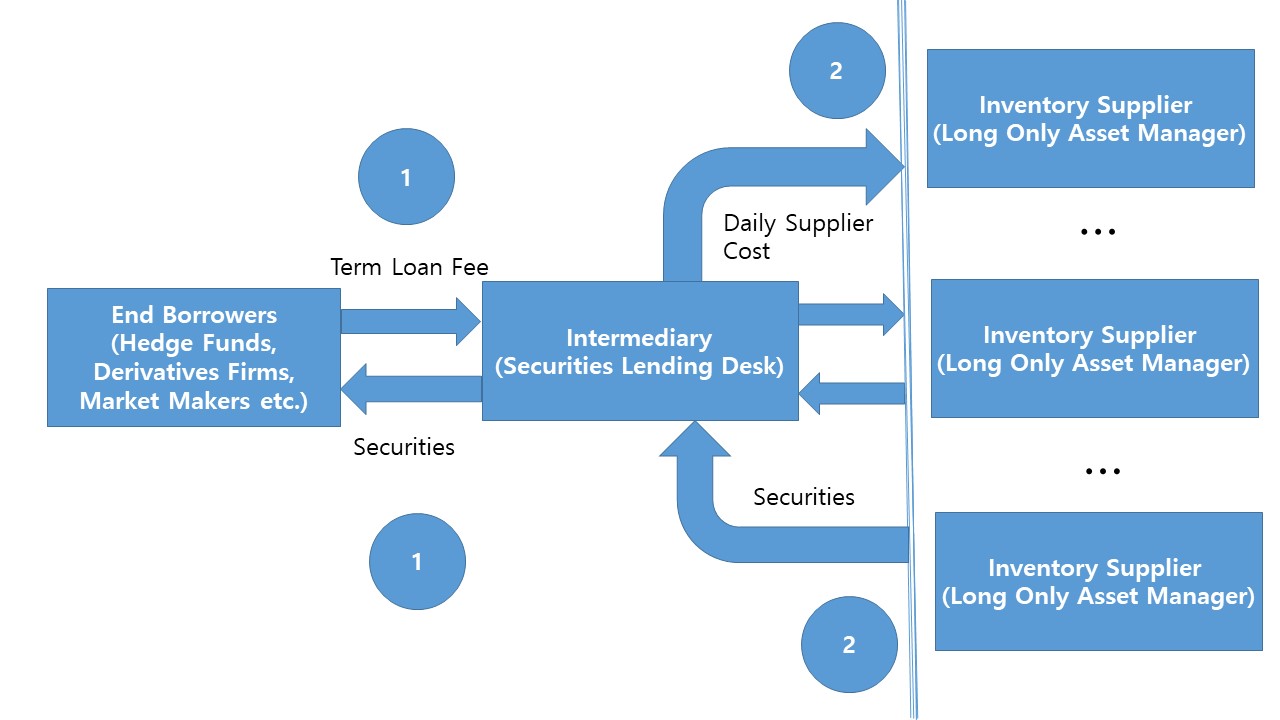}

\caption{\label{fig:Securities-Lending-Term-Loan-Structure}Securities Lending
Term Loan Structure}
\end{figure}

\end{doublespace}
\begin{defn}
\begin{doublespace}
\label{Defn:A-long-term}A long term loan is a structure wherein the
intermediary will guarantee a certain quantity to a short seller for
a certain time period at a fixed loan rate or within a band of rates. 
\end{doublespace}
\end{defn}
\begin{doublespace}
The guarantee is meant to imply that the intermediary will have to
provide the number of shares agreed upon in the term loan, but the
short seller can choose not to take on all the shares that are being
provided, highlighting the optionality present in the agreement. If
the intermediary is unable to meet the obligations of the term loan,
it will be liable for certain contractual commitments under the terms
of the term loan. The intermediary can guarantee a certain quantity
to a final end borrower by considering the fluctuations in the supply
of shares it receives, the rate at which it can source the shares
and the impact on profits when it is not able to meet the obligation
to fulfill the loan amount. This leads to the main assumption (\ref{Assumption: A-long-term})
of this paper which enables us to use pricing and risk management
tools from derivative theory to create term loan structures.
\end{doublespace}
\begin{assumption}
\begin{doublespace}
\label{Assumption: A-long-term}A long term security loan has optionality
due to the availability of shares being modeled as a Geometric-Brownian-Motion
(GBM). 
\end{doublespace}
\end{assumption}
\begin{doublespace}
Even though shares are traded, availability from a loan perspective
is a non-traded asset, making our model to price loans a ``pseudo
real option'' based methodology. Hence, we relate our methodology
to the use of options pricing for non-traded assets and briefly review
the literature on real options. The techniques in these studies can
be useful for extensions to our model, which adds to the growing use
of option theory in aiding decision making against various forms of
uncertainty. The links to real options will be clearer when we discuss
the inventory management application in Section (\ref{sec:Inventory-Management-Application}).

Bollen (1999); Adner \& Levinthal (2004); Trigeorgis (2005); Cuypers
\& Martin (2010) describe the distinction between real options and
the broader class of sequential decision-making processes including
examples of the use of real options and frameworks that explicitly
incorporate product life cycles. Lambrecht \& Perraudin (2003) discuss
preemption, cases when a firm fears that a competitor may seize an
advantage by acting first, under incomplete information. Trigeorgis
(1993) deals with the nature of option interactions and the valuation
of capital budgeting projects possessing flexibility in the form of
multiple real options. Tong \& Reuer (2007); Belderbos \& Zou (2009);
Loulianou et al. (2021) consider real options portfolio perspectives
on foreign affiliate divestments and risk implications within multinational
firms. Lee \& Makhija (2009) consider the effect of domestic uncertainty
on the real options value of international investments. Chi et al.
(2019); Trigeorgis \& Tsekrekos (2018) are detailed reviews about
real options in international business and operations research settings.

Quigg (1993) examines the empirical predictions of a real option pricing
model using a large sample of real estate market prices. Tee et al.
(2014) analyze the effects of an emissions trading scheme on the value
of bare-land on which pine trees are to be planted by applying a real
options method, assuming stochastic carbon and timber prices. Baldi
\& Trigeorgis (2015) develop a real options theory of strategic human
resource management based on human capital flexibility or adaptive
capability to respond to a range of future contingent landscapes in
contrast to the static traditional human resources view of employee
groups as a portfolio based on specificity and value. Grullon, Lyandres
\& Zhdanov (2012) close the loop between traded assets and real options,
by finding evidence that the positive relation between firm-level
stock returns and firm-level return volatility is due to firms’ real
options. Consistent with real option theory, they find that the positive
volatility-return relation is much stronger for firms with more real
options and that the sensitivity of firm value to changes in volatility
declines significantly after firms exercise their real options. Martínez-Ceseña
\& Mutale (2011); Fernandes, Cunha \& Ferreira (2011); Boomsma, Meade
\& Fleten (2012); Reuter et al. (2012); Liu \& Ronn (2020); Locatelli
et al. (2020) are studies regarding real options in the energy sector.
\end{doublespace}
\begin{doublespace}

\subsection{\label{subsec:Buy-Side-Sell-Side-Perspective}Buy Side and Sell Side
Perspective}
\end{doublespace}

\begin{doublespace}
The sell side here would be the collection of intermediary firms that
source supply and lend it on to final end borrowers. The buy side
here would have two segments of firms. One, the end borrowers who
either have a proprietary trading strategy or hedging that requires
shorting certain securities. Two, the beneficial owners who are long
securities and provide supply to the intermediaries also fall under
the buy side category. Depending on which side a firm falls under,
they will find the below derivations useful, since it will affect
the rates they charge or the rates they pay.

The primary beneficiaries of long term loans would be the actual short
sellers that have trading strategies dependent on being implemented
for a certain time horizon. If they are able to short securities at
reasonable loan rates without getting recalled or the rate getting
hiked, their trading strategies are more likely to be profitable.
Clearly, they would be willing to pay more for term loans, since the
rates on regular loans can be changed on a daily basis and the number
of shares on loan can either be reduced or the loan can be closed
out forcibly (known as a recall). The intermediaries that make term
loans will find this as a lucrative new revenue stream that eliminates
some of the corresponding volatilities. A similar reasoning on the
inventory supply side gives the following term loan structure.
\end{doublespace}
\begin{rem}
\begin{doublespace}
An alternate structure, or a long term borrow, that locks in the rate
and amount the intermediary borrows from external suppliers can also
be easily priced using our methodology.
\end{doublespace}
\end{rem}
\begin{doublespace}
Such a structure can ensure that the intermediary can lock in a minimum
level of profits on securities with volatile supply or if they are
expected to become hard to borrow. It is worth noting a stark difference
between an exclusive contract and a long term borrow. In an exclusive
contract, one intermediary will have access to all the long positions
of a beneficial owner or supplier; but the owners can still take back
their shares at any time or demand a higher borrow rate. Term loans
are less common on the supply side since inventory owners are not
willing to lock up their positions for extended periods of time. The
administrative hassle of having to put in place contracts for individual
securities, periodic assessments of being able to uphold those contracts
and other ongoing maintenance concerns for the beneficial owners makes
exclusive contracts more popular on the supply side.

Term loans have been offered by short selling desks for at-least the
last ten to eleven years. Though, to our awareness, there is no rigorous
work that provides a pricing methodology or connects the creation
and ongoing maintenance of term loans to established financial-economic
principles. Baklanova, Copeland \& McCaughrin (2015); Kashyap (2016a);
Footnote (\ref{fn:We-summarize-current}) have more details on the
size of the securities lending market. Though, there is also no data
available on the size and profitability of term loans. The global
securities on loan is around 2 trillion USD (Figure \ref{fig:Securities-Lending-Market}).
More than 10\% of the securities in all regions have loan rates in
excess of 5\% annually and there are securities with loan rates of
almost 25\%, indicating that there could be strong drivers for both
term loans and borrows, from both sides of the market. Perhaps, part
of the obstacle for the further development of the term loan business
could be the lack of a more technical approach and the training of
the personnel on the desk regarding option pricing and risk management.

\begin{figure}[H]
\includegraphics[width=10cm]{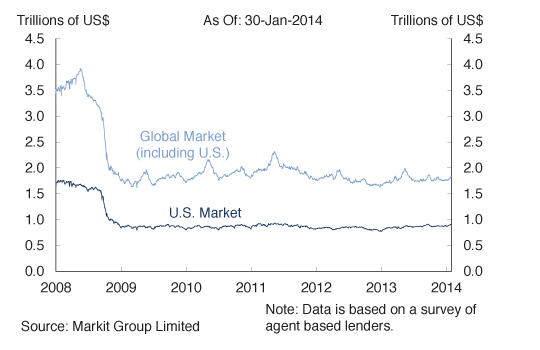}\textbf{\uline{USA}}\includegraphics[width=7cm]{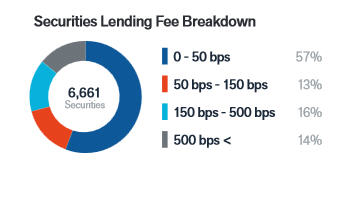}

\textbf{\uline{Asia}}\includegraphics[width=7cm]{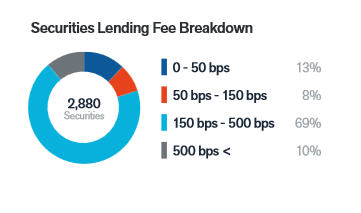}\textbf{\uline{Western
Europe}}\includegraphics[width=7cm]{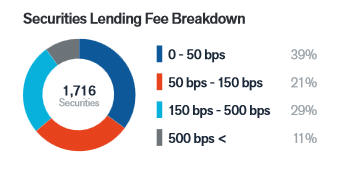}

\caption{Securities Lending Market Size and Loan Fees\label{fig:Securities-Lending-Market}}
\end{figure}

\end{doublespace}
\begin{doublespace}

\subsection{\label{subsec:Uncertainty-and-Unintended}Uncertainty and Unintended
Consequences}
\end{doublespace}

\begin{doublespace}
As we will see in Sections (\ref{sec:Rainbows-and-Baskets}; \ref{subsec:Setting-the-Stage};
\ref{subsec:Benchmark-Valuations}; \ref{subsec:Extremely-Exotic-Extensions})
valuation of a term loan requires understanding uncertainty from numerous
angles. As the participants try to find better and improved ways to
capture this uncertainty (Kashyap 2017), we might see that the profitability
of using this mechanism might decrease for participants from both
sides. This can lead to us believe that over time, as better valuation
methods are used by the participants, in an iterative fashion, the
profits will continue to erode. The other effect might be that with
more participants doing term loans, a secondary market for the options
on a term loan could get established. Also perhaps, with increasing
number of term loans, the availability of shares will get locked in
causing exponential pressure on the rates. A welcome outcome might
be the mitigation or hedging of the risk or exposure from the rest
of the loan book by reducing the total variance of profits. Froot
(1995) examines the hedging properties of real assets\footnote{\begin{doublespace}
Real assets increase in price in response to inflation shocks, with
some similarities to our earlier discussion on real options.
\end{doublespace}
}, with respect to a portfolio of stocks and bonds and finds that leveraged
positions in commodities with a high energy component, such as oil,
exhibit strong hedging properties by reducing the total variance significantly.

The cyclical nature of the transactions, which in some case can have
its tentacles spread far and wide, can result in catastrophic repercussions,
especially when huge sums of money move back and forth (Kashyap 2017).
Uboldi (2016) looks at the distortions (and perhaps instability) in
the prices of agricultural goods caused by the huge volumes of derivatives
traded on them. No discussion involving randomness is complete (Taleb
2005; 2007), especially one involving randomness to the extent that
we are tackling here, without being highly attuned to spurious results
mistakenly being treated as correct and extreme situations causing
devastating changes to the expected outcomes. Things can go drastically
wrong even in simple environments Sweeney \& Sweeney (1977), hence
in a complex valuation of the sort that we are dealing here, extreme
caution should be the rule rather than the exception. Kashyap (2017)
looks at recent empirical examples related to trading costs where
unintended consequences set in. With the above background in mind,
let us look at how we can model a term loan and come up with an indicative
price.
\end{doublespace}
\begin{doublespace}

\section{\label{sec:Rainbows-and-Baskets}Rainbows and Baskets of Binary Barrier
American Options}
\end{doublespace}

\begin{doublespace}
A simple term lending structure can be modeled as a Binary American
cash or nothing / asset or nothing put option. This option pays a
fixed cash rebate or the payoff of another asset if a certain binary
condition is satisfied during the life of the option. The binary condition
occurs when the underlying price falls below the strike price or a
preset barrier is breached making this a Binary Barrier down-and-in
put option. For a term loan, the underlying process is the availability
of shares being modelled as a GBM (Assumption \ref{Assumption: A-long-term}).
The event when the availability falls below a certain threshold, which
in our scenario is the number of shares that are being loaned out,
is the binary condition when the barrier is breached. (Section \ref{subsec:Benchmark-Valuations})
develops this analogy using all the variables that arise when pricing
such an option.

By using this insight, we can use established ways from derivative
pricing theory to price and even possibly hedge such deals. Windcliff
et al. (2007) discuss hedging with correlated assets. As a comparison,
it is worth noting that the basic structure will be cheaper than the
corresponding vanilla put option and we will report this benchmark
in the numerical results. Additional structures can be created by
using other combinations of European, American, Asian, call, put,
up, down, in, out, etc. and even on multiple securities, Rainbows
and Baskets, depending on the preferences and willingness of loan
desks and their clients. The possibilities are endless.

Barrier options are among the most common types of exotic options.
They are prevalent enough to almost consider them as plain simple
options (Carr 1995). The techniques introduced in the seminal paper
by Black \& Scholes (1973) can be applied for the valuation of barrier
options. Merton (1973) is the first work to present the valuation
of barrier options with the example of the down-and-out European call
option. Carr (1995) presents a brief but essentially complete survey
of the literature on barrier option pricing along with two extensions
of European up-and-out call option valuation. Rich (1991); Rubinstein
\& Reiner (1991) provide closed form solutions for a variety of standard
European barrier options (i.e. calls or puts which are either up-and-in,
up-and-out, down-and-in, or down-and-out). Using a conventional Black-Scholes
option-pricing environment, Hui (1996), obtains analytical solutions
of one-touch double barrier binary options that include features of
knock-out, knock-in, European and American style options. Nunes et
al. (2020) look at the early exercise boundary of American-style double
knock-out options.

Karatzas \& Wang (2000) obtain closed-form expressions for the prices
and optimal hedging strategies of American put-options in the presence
of an “up-and-out” barrier (also, Ingersoll 1998), both with and without
constraints on the short-selling of stock, demonstrating the close
links between option theory and need for securities lending. Zvan,
Vetzal \& Forsyth (2000) present an implicit method for solving partial-differential-equation
models of contingent claims prices with general algebraic constraints
on the solution. Examples of constraints include barriers and early
exercise features. In this framework, barrier options with or without
American-style features can be handled in the same way. Either continuously
or discretely monitored barriers can be accommodated, as can time-varying
barriers.

Haug (2001) uses the reflection principle (Harrison 1985) to provide
closed form valuation of American barrier options. In a barrier context
(e.g. a down-and-in call) the reflection principle basically states
that the number of paths leading from the stock price, $S_{t}$, at
a particular point in time, $t$, to a point higher than $X$, that
touch a barrier level $H\;(H<S_{t})$ before maturity, is equal to
the number of paths from an asset that starts from $H^{2}/S_{t}$
and that reach a point higher than $X$. Using the reflection principle
we can then simply value both European and American barrier options
on the basis of formulas from plain vanilla options. Thavaneswaran,
Appadoo \& Frank (2013) use fuzzy set theory (Zimmermann 1996; Carlsson
\& Fullér 2001) to price binary options since traditional option pricing
models determine the option’s expected return without taking into
account the uncertainty associated with the underlying asset price
at maturity.

Brockman \& Turtle (2003) propose a framework for corporate security
valuation based on path-dependent, barrier option models instead of
the commonly used path-independent approach by arguing that path dependency
is an intrinsic and fundamental characteristic of corporate securities
because equity can be knocked out whenever a legally binding barrier
is breached. A direct implication of this framework is that equity
will be priced as a down-and-out call option. This deviates from the
work of Black \& Scholes (1973), after which corporate securities
have been viewed in terms of standard (i.e., non-exotic) options written
on the underlying assets of the firm.

Margrabe (1978) started the theory of rainbow (multi-asset) options
by evaluating the option to exchange one asset for the other at expiry.
Stulz (1982) provides analytical formulas for European put and call
options on the minimum or the maximum of two risky assets. Ouwehand
\& West (2006) derive the Black–Scholes prices of several styles of
rainbow options using change of numeraire machinery. Hucki \& Kolokoltsov
(2007) develop a general approach for the pricing of rainbow options
with fixed transaction costs from a game theoretic point of view,
with applications for the framework of real options. Alexander \&
Venkatramanan (2012) derive general analytic approximations for pricing
European basket and rainbow options on multiple assets. Chen, Wang
\& Wang (2015) study the valuation and hedging problems of forward-start
rainbow options. We will point out other useful references in the
relevant sections below.
\end{doublespace}
\begin{doublespace}

\subsection{\label{subsec:Setting-the-Stage}Setting the Stage}
\end{doublespace}

\begin{doublespace}
We first discuss some properties of the different variables and how
the intermediary looks to influence them to increase profits. We define
all the variables as we introduce them in the text but (Appendix \ref{sec:Dictionary-of-Notation})
has a complete dictionary of all the terminology and symbols used
in the main results. The total amount of regular loans made, $L_{it}$,
other than the term loans on a security, has to meet the criteria
in Condition (\ref{cond:regular-loans}) with five numbered inequalities.
Before $L_{it}$ can satisfy the inequality on the subsequent level
it must be lesser than the earlier combination of the following five
variables: internal inventory the intermediary holds, $I_{it}$; amount
used from any exclusives, $E_{it}$; amount available from the exclusives,
$P_{it}$; external borrows, $B_{it}$; and $O_{it}$, any other supply
that can be sourced externally. 
\end{doublespace}
\begin{condition}
\begin{doublespace}
\label{cond:regular-loans}1) $L_{it}\leq I_{it}$ ; 2) $L_{it}\leq I_{it}+E_{it}$
; 3) $L_{it}\leq I_{it}+E_{it}+P_{it}$ ; 4) $L_{it}\leq I_{it}+E_{it}+P_{it}+B_{it}$
; 5) $L_{it}\leq I_{it}+E_{it}+P_{it}+B_{it}+O_{it}$.
\end{doublespace}
\end{condition}
\begin{doublespace}
Should there be a shortage, the amount on regular loans will be used
towards fulfilling obligations on terms loans. This should make it
clear that the total availability, $A_{it}=I_{it}+E_{it}+P_{it}+B_{it}+O_{it}$.
This excludes amounts on other terms loans on the same security. Should
there be a shortage, we can create a pecking order of term loans depending
on the profitability / risks on a specific loan and factors related
to the counter party, in terms, of the value of the business being
carried out with that counter party by the desk and even other parts
of the business. We assume that the process for total availability,
$A_{it}$, follows a GBM (Eq: \ref{eq:1}). It is possible to assume
that the individual components follow processes of their own and find
the properties of the combined process. We make this simplifying assumption
and take the availability as exogenously driven. Pirjol \& Zhu (2016)
is a good resource regarding the sum of GBMs. 

Stock prices and the borrow rates are also taken as exogenous GBMs
(Eq: \ref{eq:1}; \ref{eq:1-A}). What happens in practice is that
there is usually a baseline for the loan rates that is derived from
a combination of the borrow rate and a theoretical rate. The theoretical
rate is used to moderate the loan rate in case the borrow rate is
completely absent or is stale (no recent borrow for many days or even
weeks sometimes). It is also useful when supply becomes available
cheaply from some sources and remains expensive from other sources.
Weighted average borrow rates or the latest borrow rate are also used
sometimes. The theoretical rate and the spread added to the borrow
rate to form the loan rate are other decision variables at the disposal
of the intermediary. 

For simplicity, we take the spread to be a proportion of the borrow
rate, that is, $q_{it}=cQ_{it}$. $q_{it}$, is the spread added to
the borrow rate, $Q_{it}$, to form the loan rate $R_{it}$. $c,$
is the constant factor governing the spread and the borrow rate (Eq:
\ref{eq:2}). This assumption is the most realistic scenario, but
depending on the size of the exclusive and internal inventory (indicative
of market share and hence pricing power), the loan rates can further
be taken as variables the intermediary can influence. A deeper discussion
of how loan rates are set including the addition of a spread component
will be taken up in subsequent papers (Kashyap 2016b) devoted to just
the complex mechanics of rate manipulations, where we relax this assumption
and consider a wide array of factors that can alter the spread.

The external borrows, external supply, the internal inventory and
exclusive holdings represent number of shares, and hence are always
positive making them good candidates to be modelled as GBMs. The borrow
process is highly volatile, with the order of magnitude of the change
in the total amount of shares lent out, over a few months, being multiple
times of the total amount. The internal inventory and external supply
can change significantly as well, though there would be less turnover
compared to the borrow process. This would of course depend on which
parts of the firm the inventory is coming from. The holdings of the
exclusive are the least volatile of the three processes that govern
shares (or at-least the intermediary would hope so). The volatility
of inventory turnover (or any supply) can be a sign of the quality
of the inventory and this can be used to price a rate accordingly.
This extension and other improvements, where the loan rates and the
internal inventory can be made endogenous as opposed to the present
simplification, where they are exogenous, will be considered in a
subsequent paper (Kashyap 2016b). $S_{it}$ is the security price
at a particular time, $t,$ until the next time period, $t+1$. Here,
subscript $i$ denotes the $\text{i}^{th}$ security in a portfolio
and the number securities ranges from $i=1\;to\;i=n$. $n$ is the
number of securities available in the term deal, $i\in\left\{ 1,\,...\,,n\right\} $.
This applies when we are looking at rainbows, baskets or other multi-security
structures.

\begin{align}
\text{Geometric Brownian Motion } & \equiv\begin{cases}
\frac{dS_{it}}{S_{it}} & =\mu_{S_{i}}dt+\sigma_{S_{i}}dW_{t}^{S_{i}}\\
\frac{dA_{it}}{A_{it}} & =\mu_{A_{i}}dt+\sigma_{A_{i}}dW_{t}^{A_{i}}\\
\frac{dQ_{it}}{Q_{it}} & =\mu_{Q_{i}}dt+\sigma_{Q_{i}}dW_{t}^{Q_{i}}\\
\frac{dI_{it}}{I_{it}} & =\mu_{I_{i}}dt+\sigma_{I_{i}}dW_{t}^{I_{i}}\\
\frac{dO_{it}}{O_{it}} & =\mu_{O_{i}}dt+\sigma_{O_{i}}dW_{t}^{O_{i}}\\
\frac{dB_{it}}{B_{it}} & =\mu_{B_{i}}dt+\sigma_{B_{i}}dW_{t}^{B_{i}}\\
\frac{d\left(E_{it}+P_{it}\right)}{\left(E_{it}+P_{it}\right)} & =\mu_{\left(E_{i}+P_{i}\right)}dt+\sigma_{\left(E_{i}+P_{i}\right)}dW_{t}^{\left(E_{i}+P_{i}\right)}
\end{cases}\label{eq:1}
\end{align}
\begin{eqnarray}
\text{Geometric Brownian Motion} & \Longleftrightarrow & \text{Log\;\ Normal\;\ Processes}\label{eq:1-A}\\
W_{t}^{X_{i}} & \Longleftrightarrow & \text{Weiner\;\ Process\;\ governing}\;X_{i}^{th}\;\text{variable}.\nonumber \\
E(dW_{t}^{X_{i}}dW_{t}^{X_{j}}) & = & \rho_{X_{i},X_{j}}dt\;\quad=0\nonumber \\
\rho_{X_{i},X_{j}} & \Longleftrightarrow & \text{Correlation\;\ between}\;W_{t}^{X_{i}}\:and\:W_{t}^{X_{j}}\nonumber \\
X_{i} & \in & \left\{ S_{i},Q_{i},A_{i},I_{i},O_{i},B_{i},\left(E_{i}+P_{i}\right)\right\} \nonumber 
\end{eqnarray}
\begin{equation}
R_{it}=Q_{it}+q_{it}=Q_{it}+cQ_{it}=\tilde{c}Q_{it}\label{eq:2}
\end{equation}
The locate process\footnote{Another piece of the puzzle is the locate requests received by the
lending desk on a daily basis. These locate requests are sent by end
borrowers, in advance of actually borrowing shares to short, to get
an indication of the quantity of shares they can borrow. This is done
to ensure that their shorting needs for the trading day can be met.
The intermediary can fill either a portion or the entire locate request
depending on its inventory situation and also depending on how many
firms are sending it locates for that particular security for that
trading day. But once a locate request is filled by a lending desk,
they are expected to have that number of shares ready for the borrowing
firm. A borrowing firm, on the other hand, can borrow as much of the
filled locate amount as it chooses to. This mismatch between locate
approvals and actual borrows then leads to another aspect of the lending
business that can be optimized by implementing different variations
of the Knapsack Algorithm (Martello \& Toth 1987) and we consider
this in another paper (Kashyap 2016b). The conversion factor from
locates to borrows can be estimated as part of the locate approval
optimization. For the present purpose of estimating a term loan value,
we take this conversion factor as exogenously given. Lending desks
have been considering charging a nominal fee based on the locate amount
they agree to fill to discourage borrowers from sending in spurious
locate requests, though this practice is yet to be formally institutionalized
across the lending industry.} is more precisely modelled as a Poisson process since it would be
reasonably accurate to consider locates as discrete events occurring
in time (Eq: \ref{eq:3}; \ref{eq:4}) . That is, requests for a certain
number of shares being received in a given time interval. Given that
most of the time, the number and size of the share requests can be
large, we would need to use a high value of the arrival rate, $\lambda_{i}$.
Hence, we approximate this poison process as the absolute value of
a normal distribution with appropriate units (Cheng 1949; Eq: \ref{eq:5}).
This introduces a certain amount of skew, which is naturally inherent
in this process. The locate process can be useful to know what demand
the desk can expect and when higher demand is anticipated, it can
supplement the volatility of the availability, as an indicator that
the rate on the term loan needs to be higher. $D_{it}$ are the locate
requests received in shares at a particular time, $t$, for security,
$i$.
\begin{align}
\text{Prob}\left(D_{it}\right) & =\frac{e^{-\lambda_{i}}\left(\lambda_{i}\right)^{D_{it}}}{\left(D_{it}\right)!}\label{eq:3}\\
\text{Locate\;\ Process} & \Longleftrightarrow\text{Poission\;\ Process\;\ with\;\ Arrival\;\ Rate},\;\lambda_{i}\label{eq:4}\\
\text{Alternately},\;D_{it} & \sim\left|N\left(\mu_{D_{i}},\sigma_{D_{i}}^{2}\right)\right|,\;\text{Absolute\;\ Normal\;\ Distribution}\label{eq:5}
\end{align}

Given the number of GBMs the complete system incorporates, a standard
theoretical approach to solving systems involving multiple GBMs (Eq:
\ref{eq:1}) or obtaining a closed form solution, is presently unknown
to the best of our knowledge. In later sections, we provide closed
form solutions to simple scenarios. An alternate approach would be
to estimate the parameters of all the random variables from historical
data and run simulations that would provide the required valuation.
We also provide another technique to use the historical data set directly
in Section (\ref{sec:Lost-in-Estimation:}). Another simplifying assumption
in the numerical results is that the Wiener process governing each
of these variables is independent\footnote{The Wiener process $W_{t}$ is characterised by the following properties
(Baxter \& Rennie 1996):
\begin{enumerate}
\item $W_{0}=0$.
\item $W$ has independent increments: for every $t>0$, the future increments
$W_{t+u}-W_{t}$, $u\geq0$ are independent of the past values $W_{s}$,
$s\leq t$.
\item $W$ has Gaussian increments: $W_{t+u}-W_{t}$ is normally distributed
with mean $0$ and variance $u$, $W_{t+u}-W_{t}\sim{\mathcal{N}}\left(0,u\right)$.
\item $W$ has continuous paths: $W_{t}$ is continuous in $t$.
\end{enumerate}
}. In addition, our baseline models are diffusions without mean-reversion
which we can justify since a term loan contract is unlikely to exceed
one or two years and the variables will not take on excessively large
values in this duration. Non-negative drift rates can grow a variable
to infinity over time, but some of our variables have negative drift
rates as well as we see in the numerical results in Section (\ref{subsec:Sample-Data-Generation}).
Hull (2010) provides an excellent account of using GBMs to model stock
prices and other time series that are always positive. 

It is worth keeping in mind that the intermediary firm or the beneficial
owner will have access to a historical time series of some of the
variables and hence can estimate the actual process for the corresponding
variables. External parties will not know the time series of all the
variables with certainty and hence would need to substitute the unknown
variables with a simulation based process, similar to what we have
used in Section (\ref{subsec:Sample-Data-Generation}). A simplification
is to assume that the variables are independent. Phillips \& Yu (2009)
provide an overview of maximum likelihood and Gaussian methods of
estimating continuous time models used in finance. Campbell et al.
(1998); Lai \& Xing (2008); Cochrane (2009) are other handy resources
on using maximum likelihood estimation (MLE) and generalized method
of moments (GMM). A backward induction based computer program, which
simulates the randomness component of the variables involved, can
calculate the value of the term loans based on the payoff expressions
we will derive. See: Miranda \& Fackler (2002) for a discussion of
using numerical techniques. Chiani, Dardari \& Simon (2003) for approximations
to the error function. Norstad (1999) for a discussion of the log
normal distribution. Gujarati (1995); Hamilton (1994) discuss time
series simplifications and the need for parsimonious models.
\end{doublespace}
\begin{doublespace}

\subsection{\label{subsec:Benchmark-Valuations}Benchmark Valuations}
\end{doublespace}

\begin{doublespace}
The objective of a rational, risk neutral decision maker at the intermediary
would be to maximize profits from the overall loan book by swapping
the supply available to either term loans or regular loans and this
action forms a crucial part of his decision making. With the setting
discussed earlier (Section \ref{subsec:Setting-the-Stage}; Assumption
\ref{Assumption: A-long-term}; Definition \ref{Defn:A-long-term}),
a term loan becomes an American down-and-in binary put option with
the availability process being equivalent to the underlying price
process. The strike and the time to expiration are the quantity and
the time period of the term loan respectively. The volatility governing
this option is the volatility of the availability process. The value
of this option, $\upsilon$, can be expressed as a continuously compounded
annualized rate, $w=\ln\left(\upsilon/H_{i}S_{i0}\right)/T\;\because\;\upsilon=H_{i}S_{i0}e^{wT}$,
over the duration of the loan. Here, $H_{i}$ is the quantity of the
term deal for security $i$ for the entire duration of the loan extending
from $t=0\;to\;t=T.$ $\upsilon,$ is also the valuation of the term
loan for the overall duration. $H_{i}S_{i0}$ represents the notional
amount in monetary terms when the contract is initiated. We first
consider two simple forms of payoffs:
\end{doublespace}
\begin{enumerate}
\begin{doublespace}
\item A single constant payoff, $K_{i}$, if there is a breach: $\left(A_{it}\leq H_{i}\right)\;;\;0\leq t\leq T$.
This applies only the first time the breach happens. It is to be understood
that the contract is complete after the first breach and the payoff
is made (Eq: \ref{eq:6}). The valuation, $\upsilon^{constant}$,
is given by,\textcolor{black}{
\begin{align}
\upsilon^{constant} & =E_{0}\left[K_{i}\boldsymbol{I}\left(A_{it}<H_{i}\right)\right]\;;\;t=\inf\left\{ \;0\leq u\leq T\;\mid\;\left(A_{iu}\leq H_{i}\right)\;\right\} \label{eq:6}
\end{align}
} Here, $\boldsymbol{I}\left(\cdots\right)$ is the indicator function.
$\inf\left\{ \ldots\right\} $ is the infimum or the greatest lower
bound.
\item A single payoff proportional to the time left on the deal, $r\left(T-t\right)/T$
or $K_{i}\left(T-t\right)/T$, if there is a breach: $\left(A_{it}\leq H_{i}\right)\;;\;0\leq t\leq T$.
The proportional payoff could be based on something time varying such
as interest rates or other values as well, denoted here by $r$; it
could also be proportional to a constant value, $K_{i}$. Again, this
is valid only when the first breach happens (Eq: \ref{eq:7}). The
valuation, $\upsilon^{proportional\;time}$, is given by,\textcolor{black}{
\begin{equation}
\upsilon^{proportional\;time}=E_{0}\left[K_{i}\frac{\left(T-t\right)}{T}\boldsymbol{I}\left(A_{it}<H_{i}\right)\right]\;;\;t=\inf\left\{ \;0\leq u\leq T\;\mid\;\left(A_{iu}\leq H_{i}\right)\;\right\} \label{eq:7}
\end{equation}
}
\end{doublespace}
\end{enumerate}
\begin{doublespace}
Carr (1998) gives a semi-explicit approximation for American put option
values in the Black-Scholes model. Garlappi (1996) uses dynamic programming
to value an American put. Zhu (2006) gives an exact and explicit solution
of the Black–Scholes equation for the valuation of American put options.
The closed-form solution is written in the form of a Taylor’s series
expansion that generates a convergent numerical solution if the solution
of the corresponding European option is taken as the initial guess
of the solution series. The optimal exercise boundary is found as
an explicit function of the risk-free interest rate, the volatility
and the time to expiration. Ingersoll (2000) uses digitals as building
blocks, since their payoffs are either on or off, to give accurate
approximations for American options.

\textcolor{black}{Dai \& Kwok (2004) present analytic price formulas
for knock-in American options under the Black-Scholes pricing framework.
The knock-in region and the exercise region of the underlying American
option may intersect with each other, hence the price formulas take
different analytic forms depending on the interaction between the
knock-in region of the down-in feature of the option contract and
the exercise region of the underlying American option. The price function
of a knock-in American option can be expressed in terms of the price
functions of simple barrier options and American options, facilitating
numerical valuation attempts.}

Another interesting technique is the \textcolor{black}{static hedging
}portfolio (SHP) approach. The main idea is to create a static portfolio
of standard European options whose values match the payoff of the
path-dependent options being hedged at expiration and along the boundary.
A SHP is formulated in two different ways. The first approach, proposed
by Carr \& Bowie (1994); Carr, Ellis \& Gupta (1998) is to construct
static positions in a continuum of standard European options of all
strikes, with the maturity date $T$ matching that of the exotic option
(e.g. a barrier option). The second approach, developed by Derman,
Ergener \& Kani (1995), uses a standard European option to match the
boundary at maturity of the exotic option and a continuum of standard
European options of maturities from time $0$ to time $T$ to match
the boundary before maturity of the exotic option, with the strike
equaling the boundary before maturity (e.g. the barrier level of a
knockout option).

\textcolor{black}{In comparison to dynamic hedging, there are three
major advantages of static hedging (Chung \& Shih 2009).} First, static
hedging is considerably cheaper than dynamic hedging when transaction
costs are large. The dynamic hedge portfolio has to be adjusted often,
for example for options with large gamma (such as barrier options),
increasing the transaction costs. Second, it is widely documented
that static hedging is less sensitive to the model risk such as volatility
mis-specification. Third, due to discrete trading, dynamic hedging
may have substantial hedging errors 

\textcolor{black}{Chung \& Shih (2009) further show that the }SHP\textcolor{black}{{}
approach may also serve as a good pricing method for American options
and by its nature, a good hedging method as well. Unlike the use of
numerical methods to pricing American options, one specific advantage
of applying static hedge techniques is that the recalculation of the
American option price in the future is as easy as the valuations of
European options because there is no need to solve the static hedge
portfolio again since the value of the static hedge portfolio is simply
the summation of the European option prices in the portfolio. Hence,
the proposed static hedge approach is especially advantageous when
the European options have closed-form solutions.}

Chung, Shih \& Tsai (2013) extend the SHP approach to price and hedge
American knock-in put options under the Black–Scholes model and the
constant elasticity of variance (CEV) model of Cox (1975; 1996). They
first derive the American knock-in option values on the barrier and
then construct a SHP which matches the knock-in option prices before
maturity at evenly-spaced time points on the barrier. They use standard
European calls to construct SHPs for American up-and-in put options
and standard European puts to construct SHPs for American down-and-in
put options, respectively.

\textcolor{black}{A cash or nothing American binary put option has
a closed form solution, given by the expressions for the Laplace transform
of the distribution of the conditional first passage time of Brownian
motion to a particular level (}Shreve 2004; \textcolor{black}{Azimzadeh
2015; Eq: \ref{eq:8}; \ref{eq:8-A}). The asset-or-nothing case is
a simple scaling (by the strike price) of the cash-or-nothing case.
The price of a cash-or-nothing American binary put with strike $H_{i}>0$,
volatility of availability process $\sigma>0$, drift $\mu\in\Re$,
the set of real numbers and time-to-expiry $T$ is given below. It
is assumed $H_{i}<A_{i0}$, the initial value of availability, since
a binary option is exercised as soon as it is in the money. 
\begin{equation}
\upsilon^{constant}=\frac{K_{i}}{2}e^{a\left(\xi-b\right)}\left\{ 1+\text{sgn}\left(a\right)\text{erf}\left(\frac{bT-a}{\sqrt{2T}}\right)+e^{2ab}\left[1-\text{sgn}\left(a\right)\text{erf}\left(\frac{bT+a}{\sqrt{2T}}\right)\right]\right\} \label{eq:8}
\end{equation}
\begin{eqnarray}
\text{Here, }\text{erf\ensuremath{\left(\right)} is the error function};\text{\text{sgn}\ensuremath{\left(\right)} is the sign function};\label{eq:8-A}\\
a=\frac{1}{\sigma}\log\frac{H_{i}}{A_{i0}};\text{ }\xi=\frac{\mu}{\sigma}-\frac{\sigma}{2};\text{ }b=\sqrt{\xi^{2}+2\mu};\text{ }\xi^{2}+2\mu\geq0;\text{ }\text{i.e. }\ensuremath{b}\text{ is real.}\nonumber 
\end{eqnarray}
}

Since the work of Boyle (1977) provided a proper framework for Monte
Carlo pricing of options, numerous extensions have been done. Longstaff
\& Schwartz (2001) continues to be a popular technique that uses least
squares to estimate the conditional expected payoff to the option-holder
from continuation, making this approach readily applicable in path-dependent
and multi-factor situations where traditional finite difference techniques
cannot be used. Zanger (2018) analyzes the convergence of the Longstaff–Schwartz
algorithm using a single set of independent Monte Carlo sample paths
that is repeatedly reused for all exercise time-steps.

Rogers (2002); Haugh \& Kogan (2004) calculate the lower and upper
bound of American option prices using Monte Carlo simulation by representing
the price as a solution of a properly defined dual minimization problem.
Nadarajah, Margot \& Secomandi (2017) look at least squares Monte
Carlo methods with applications to energy real options.
\end{doublespace}
\begin{doublespace}

\subsection{\label{subsec:Extremely-Exotic-Extensions}Extremely Exotic Extensions}
\end{doublespace}

\begin{doublespace}
The below payoff expressions are more realistic and reflect the full
cost borne by the desk to fulfill the obligations on a term loan.
The American style exercise option is only partly applicable, since
the proper way to look at them is by considering a series of cash
flows being exchanged between the two parties for the duration of
the loan, though any mid-term terminal clauses can be modelled as
American options being exercised.
\end{doublespace}
\begin{enumerate}
\begin{doublespace}
\item A constant or proportional payoff is levied every-time the barrier
is passed in a downward direction giving the valuations, $\upsilon^{constant\;counter}$
or $\upsilon^{proportional\;counter}$ respectively. The indicator
function can be used to count the number of instances when this occurs,
i.e. $\sum_{t=0}^{T}\boldsymbol{I}\left(A_{it}<H_{i}\right)$. When
the availability moves back above the barrier, a reverse cash flow
can be accumulated. An approach from Erdos \& Hunt (1953) derives
results regarding the change of signs of sums of random variables.
This can be used to estimate the number of times the availability
falls below the term loan quantity and hence the number of times a
penalty, $K_{i}$ or $K_{i}\left(T-t\right)$, is incurred.
\end{doublespace}

\begin{doublespace}
Alternately, we can proceed as follows to arrive at the following
result (Proposition \ref{The-indicator-valuation-expression}). Let
the following variables, $\left\{ \text{Up}_{it},\text{Down}_{it}\right\} $
represented by the corresponding functions below (Eq: \ref{eq:9};
\ref{eq:10}), denote the criteria that captures whether availability
is more than the quantity on the term loan for security $i$ or vice
versa and the cash-flows to be made accordingly. When availability
is more than the quantity, we are in the $\text{Up}_{it}$ state.
This means, that the barrier breach that can happen next when we are
in the $\text{Up}_{it}$ state is in the reverse direction, or it
will be in the downward direction. 

\begin{align}
\text{Up State / Down Breach}\equiv\text{Up}_{it} & \equiv\frac{\max\left(A_{it}-H_{i},0\right)}{\left(A_{it}-H_{i}\right)}=\begin{cases}
1\;\text{if}\; & A_{it}\geq H_{i}\\
0 & \text{Otherwise}
\end{cases}\label{eq:9}\\
\text{Down State / Up Breach}\equiv\text{Down}_{it} & \equiv\frac{\max\left(H_{i}-A_{it},0\right)}{\left(H_{i}-A_{it}\right)}=\begin{cases}
1\;\text{if}\; & A_{it}\leq H_{i}\\
0 & \text{Otherwise}
\end{cases}\label{eq:10}
\end{align}

\end{doublespace}
\begin{prop}
\begin{doublespace}
\label{The-indicator-valuation-expression}The valuation expressions
that capture a constant or proportional payoff every-time the barrier
is passed are given by,
\begin{eqnarray*}
\upsilon^{constant\;counter} & = & E_{0}\left[K_{i}\sum_{i=1}^{n}\left(1-\text{Up}_{i0}\right)+K_{i}\sum_{i=1}^{n}\sum_{t=0}^{T-1}\left\lfloor \frac{1+\text{Up}_{it}-\text{Up}_{i,t+1}}{2}\right\rfloor \right.\\
 & - & \left.\tilde{K}_{i}\sum_{i=1}^{n}\sum_{t=0}^{T-1}\left\lfloor \frac{1-\text{Up}_{it}+\text{Up}_{i,t+1}}{2}\right\rfloor \right]
\end{eqnarray*}
\begin{eqnarray*}
\upsilon^{proportional\;counter} & = & E_{0}\left[K_{i}\sum_{i=1}^{n}\left(1-\text{Up}_{i0}\right)+K_{i}\sum_{i=1}^{n}\sum_{t=0}^{T-1}\frac{\left(T-t-1\right)}{T}\left\lfloor \frac{1+\text{Up}_{it}-\text{Up}_{i,t+1}}{2}\right\rfloor \right.\\
 & - & \left.\tilde{K}_{i}\sum_{i=1}^{n}\sum_{t=0}^{T-1}\frac{\left(T-t-1\right)}{T}\left\lfloor \frac{1-\text{Up}_{it}+\text{Up}_{i,t+1}}{2}\right\rfloor \right]
\end{eqnarray*}
Here, $\left\lfloor x\right\rfloor $ is the floor or the greatest
integer function, which gives the largest integer less than $x$.
$E_{0}\left[\cdots\right]$ is the expectation taken at time $t=0$.
$K_{i},\tilde{K}_{i},$ are the constant payoffs on the term deal
if the barrier is breached for security $i$ in the downward and in
the upward direction respectively.
\end{doublespace}
\end{prop}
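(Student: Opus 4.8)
The plan is to reduce the proposition to a purely combinatorial identity, namely that each floor expression is the indicator of a single directed barrier crossing over the elementary interval $[t,t+1]$, after which assembling the full valuation is just linearity of $E_0\left[\cdots\right]$. First I would record the truth table of the pair $\left(\text{Up}_{it},\text{Up}_{i,t+1}\right)\in\{0,1\}^2$. The quantity $\frac{1+\text{Up}_{it}-\text{Up}_{i,t+1}}{2}$ equals $1$ only in the state $(1,0)$ and otherwise lies in $\{0,\tfrac12\}$, so the floor sends $\tfrac12\mapsto 0$ while fixing $0$ and $1$; hence $\left\lfloor \frac{1+\text{Up}_{it}-\text{Up}_{i,t+1}}{2}\right\rfloor=1$ precisely when availability is at or above $H_i$ at time $t$ and below it at $t+1$, i.e. exactly on a downward breach. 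The symmetric computation gives $\left\lfloor \frac{1-\text{Up}_{it}+\text{Up}_{i,t+1}}{2}\right\rfloor=1$ precisely on an upward breach, the state $(0,1)$. This is the only step carrying content, and it is finite case-checking.

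Next I would assemble the path functional. Summing the downward-breach indicator over $t=0,\dots,T-1$ counts, along any realized availability path, the number of transitions from the up state into the down state; multiplying by $K_i$ charges one penalty per crossing. The upward-breach sum, weighted by $-\tilde{K}_i$, accumulates the reverse cash flow credited each time availability recovers above $H_i$, matching the prescription that when availability moves back above the barrier a reverse cash flow is accrued. Transition counting cannot register a breach already in force at inception, so the boundary term $K_i\sum_{i=1}^{n}\left(1-\text{Up}_{i0}\right)$ is appended to levy the penalty when the path starts in the down state $\left(\text{Up}_{i0}=0\right)$; under the standing assumption $H_i<A_{i0}$ this term vanishes, but retaining it keeps the formula valid for an arbitrary initial state. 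Summing over securities $i=1,\dots,n$ and taking $E_0\left[\cdots\right]$ then yields $\upsilon^{constant\;counter}$. For $\upsilon^{proportional\;counter}$ the argument is identical except each breach indicator is pre-multiplied by the remaining-life fraction: a crossing detected over $[t,t+1]$ is realized at $t+1$, so the time left is $T-(t+1)=T-t-1$, giving the weight $\frac{T-t-1}{T}$, while the initial boundary term carries full weight $1$ since it is incurred at inception.

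The main obstacle is less a genuine difficulty than a bookkeeping hazard: the degenerate case $A_{it}=H_i$, where the ratio definitions of $\text{Up}_{it}$ and $\text{Down}_{it}$ are formally $0/0$. I would resolve this by appealing to the case-statement definitions (so $\text{Up}_{it}=1$ at equality), which keeps the floor identities internally consistent, and then observe that since $A_{it}$ follows a GBM with continuous marginal law the event $\{A_{it}=H_i\}$ has probability zero and so does not affect $E_0\left[\cdots\right]$. The second point to watch is avoiding double counting between the boundary term and the first transition term; this is automatic, because the sum indexes transitions beginning at $t=0\to 1$ whereas the boundary term speaks only to the state at $t=0$. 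With these two checks in place, linearity of expectation over the finitely many interval contributions delivers both stated formulas.
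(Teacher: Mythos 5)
Your proposal is correct and follows essentially the same route as the paper's own proof: both arguments reduce the floor expressions to indicators of the four possible state pairs $\left(\text{Up}_{it},\text{Up}_{i,t+1}\right)$ by finite case-checking, count the downward and upward transitions by summing those indicators, append the initial-state term $K_{i}\left(1-\text{Up}_{i0}\right)$, and conclude by linearity of $E_{0}\left[\cdots\right]$. Your added care about the degenerate case $A_{it}=H_{i}$ and the timing convention behind the weight $\left(T-t-1\right)/T$ goes slightly beyond what the paper records, but does not change the argument.
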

\begin{proof}
\begin{doublespace}
See Appendix (\ref{subsec:Proof-of-Proposition:The-indicator-valuation-expression}).
\end{doublespace}
\end{proof}
\begin{doublespace}
\item Valuation, $\upsilon^{stock\;holding}$, is based on a payoff equal
to the cost of holding a stock position equal to the extent of shortfall
in any given time period, considered for the entire duration of the
term loan (Eq: \ref{eq:11}). In the first period, we accrue the cost
of buying the amount of short fall and from the subsequent periods
we need to either buy or sell to make up for whether the availability
is higher or lower than the deal amount. After the last time period,
we dispose of any excess stock accumulated. We use the following notations
in some of the formulations below: $\forall$ $a$, $b$ $\in\Re_{+}$,
$\quad$$a^{+}=\max\left\{ a,0\right\} $, $\quad$$a^{-}=\max\left\{ -a,0\right\} $,
and$\quad$ $a\bigwedge b=\min\left\{ a,b\right\} $.
\begin{align}
\upsilon^{stock\;holding} & =E_{0}\left[\sum_{i=1}^{n}\left(H_{i}-A_{i0}\right)^{+}S_{i0}+\sum_{i=1}^{n}\sum_{t=1}^{T}\left[\left(H_{i}-A_{it}\right)^{+}-\left(H_{i}-A_{i,t-1}\right)^{+}\right]S_{it}\right.\label{eq:11}\\
 & \qquad\qquad\left.\phantom{\sum_{i=1}^{n}}-\left(H_{i}-A_{iT}\right)^{+}S_{i,T+1}\right]
\end{align}
This can be simplified to (Eq: \ref{eq:12}),
\begin{equation}
\upsilon^{stock\;holding}=E_{0}\left[\sum_{i=1}^{n}\sum_{t=0}^{T}\left[\left(H_{i}-A_{it}\right)^{+}\right]\left(S_{it}-S_{i,t+1}\right)\right]\label{eq:12}
\end{equation}
In continuous time\footnote{\begin{doublespace}
Brennan (1979) considers contingent claims pricing in discrete time
models. Shieh, Wang \& Yates (1980); Barraud (1981); Sung, Lee \&
Lee (2009) are exhaustive references for methods for the conversion
of discrete to continuous models and vice versa.
\end{doublespace}
} with discounting using the interest rate, $r$, we have (Eq: \ref{eq:13}),
\textcolor{black}{
\begin{equation}
\upsilon^{stock\;holding}=E_{0}\left[\sum_{i=1}^{n}\left\{ \int_{0}^{T}e^{-ru}\left[H_{i}-A_{i}\left(u\right)\right]^{+}dS_{i}\left(u\right)\right\} \right]\label{eq:13}
\end{equation}
} A forward starting loan that starts at time, $\kappa\geq0$, further
than the present time, can be handled by considering a modified summation
(Eq: \ref{eq:14}) below,
\begin{equation}
E_{0}\left[\sum_{i=1}^{n}\sum_{t=\kappa}^{T}\left[\left(H_{i}-A_{it}\right)^{+}\right]\left(S_{it}-S_{i,t+1}\right)\right]\label{eq:14}
\end{equation}
 This has some similarities to Asian options \footnote{\begin{doublespace}
Rogers \& Shi (1995) try a partial-differential-equation, PDE, approach.
Grant, Vora \& Weeks (1997) extend Monte Carlo methods to path dependent
securities and use it to value American Asian options. Milevsky \&
Posner (1998) approximate the finite sum of correlated log-normal
variables required to calculate the payoff of arithmetic Asian options.
Dufresne (2000; 2001) obtains a Laguerre series expansion for both
Asian and reciprocal Asian options. Fusai (2004) prices Asian options
by computing a Laplace transform with respect to time-to-maturity
and a Fourier transform with respect to the logarithm of the strike.
Callegaro, Fiorin \& Grasselli (2019) introduce a pricing methodology
based on the Fourier transform of the asset process. Bormetti et al.
(2018); Jeong et al. (2019) detail other monte carlo based option
pricing methods.
\end{doublespace}
} though we have two sources of uncertainties, from the availability
and from the stock price assuming the interest rate is not stochastic,
and the payoff would be more than just a simple average, though it
involves some form of time summation. Barraquand \& Martineau (1995);
Andersen \& Broadie (2004) provide Monte Carlo based numerical techniques
for multi-dimensional American options. Broadie \& Detemple (1997)
identify optimal exercise strategies and provide valuation formulas
for several types of American options on two or more assets. A powerful
Markov chain method to simulate multivariate distributions is given
in Hastings (1970); Chib \& Greenberg (1995).
\item The examples thus far ignore the borrow rate at which availability
is sourced. A realistic scenario that considers this would add the
payoff from the other scenarios as a spread on top of the borrow rate
giving the valuations, $\upsilon^{borrow\;rate}$. To the cost of
holding a stock position equal to the extent of shortfall, adding
a borrow rate corresponding to the amount on the term loan fulfilled
from the availability, gives a complete formulation for the revenue
that can be expected from this structure. This gives the valuation
of the term loan with three sources of uncertainty (Eq: \ref{eq:15};
\ref{eq:16}). 
\begin{equation}
\upsilon^{borrow\;rate}=E_{0}\left[\sum_{i=1}^{n}\sum_{t=0}^{T}\left\{ \left[\left(H_{i}-A_{it}\right)^{+}\right]\left(S_{it}-S_{i,t+1}\right)+S_{it}Q_{it}\left(H_{i}\text{\ensuremath{\bigwedge}}A_{it}\right)\right\} \right]\label{eq:15}
\end{equation}
\textcolor{black}{
\begin{align}
\upsilon^{borrow\;rate} & =E_{0}\left\{ \sum_{i=1}^{n}\left[\int_{0}^{T}e^{-ru}\left\{ \left[H_{i}-A_{i}\left(u\right)\right]^{+}\right\} dS_{i}\left(u\right)\right.\right.\label{eq:16}\\
 & +\left.\left.\int_{0}^{T}e^{-ru}\left\{ S_{i}\left(u\right)Q_{i}\left(u\right)\left[H_{i}\text{\ensuremath{\bigwedge}}A_{i}\left(u\right)\right]\right\} du\right]\right\} 
\end{align}
The above formulation }(Eq: \ref{eq:15}; \ref{eq:16})\textcolor{black}{{}
treats the borrow rate at any particular point in time, as being applicable
to the entire availability used up for the term deal. A more realistic
scenario can treat even the quantity that applies to the borrow as
being brought in or taken out similar to the way a stock position
is bought or sold at the prevailing prices (}Eq: \ref{eq:17}; \textcolor{black}{\ref{eq:19};
\ref{eq:20}). This is realistic since different borrow amounts are
sourced at different rates. Either form can be used, depending on
the specifics of how the borrow rates are managed by the desk. In
the last period, unlike a stock position, there is no cash-flow from
unwinding the shares borrowed, but for simplicity, we can assume that
the shares borrowed are used for another loan at the prevailing borrow
rate (though the actual proceeds will be higher since the loan rate
will be more than the borrow rate). This gives, 
\begin{align}
\upsilon^{borrow\;rate} & =E_{0}\left[\sum_{i=1}^{n}\sum_{t=0}^{T}\left\{ \left[\left(H_{i}-A_{it}\right)^{+}\right]\left(S_{it}-S_{i,t+1}\right)+\left(H_{i}\text{\ensuremath{\bigwedge}}A_{it}\right)\left[S_{it}Q_{it}-S_{i,t+1}Q_{i,t+1}\right]\right\} \right.\label{eq:17}\\
 & +\left.\sum_{i=1}^{n}\left(H_{i}\text{\ensuremath{\bigwedge}}A_{iT}\right)S_{i,T+1}Q_{i,T+1}\right]\nonumber 
\end{align}
\begin{equation}
\upsilon^{borrow\;rate}\approx E_{0}\left[\sum_{i=1}^{n}\sum_{t=0}^{T}\left\{ \left[\left(H_{i}-A_{it}\right)^{+}\right]\left(S_{it}-S_{i,t+1}\right)+\left(H_{i}\text{\ensuremath{\bigwedge}}A_{it}\right)\left[S_{it}Q_{it}-S_{i,t+1}Q_{i,t+1}\right]\right\} \right]\label{eq:19}
\end{equation}
\begin{align}
\upsilon^{borrow\;rate} & =E_{0}\left\{ \sum_{i=1}^{n}\left[\int_{0}^{T}\left\{ e^{-ru}\left[H_{i}-A_{i}\left(u\right)\right]^{+}\right\} dS_{i}\left(u\right)\right.\right.\label{eq:20}\\
 & +\left.\left.\int_{0}^{T}\left\{ e^{-ru}\left[H_{i}\text{\ensuremath{\bigwedge}}A_{i}\left(u\right)\right]\left[S_{i}\left(u\right)dQ_{i}\left(u\right)+Q_{i}\left(u\right)dS_{i}\left(u\right)+dS_{i}\left(u\right)dQ_{i}\left(u\right)\right]\right\} \right]\right\} \nonumber 
\end{align}
}
\item We could calculate the revenue of the desk by looking at the cost
of sourcing external borrows at the borrow rate, utilizing internal
inventory and exclusive holdings to meet the shorting demands of external
clients at the loan rate (Eq: \ref{eq:22}; \ref{eq:24}). This would
also differ from the previous scenarios by factoring in the different
sources of inventory and the corresponding costs. This assumes that
there is no cost to use internal inventory and a constant daily fee
for the use of the exclusive holdings. Further complications are possible
by including transaction costs for the use of exclusives and funding
rates for the internal inventory. These benchmark revenue figures
can act as a sanity check and provide practical bounds for the term
loan valuation. 
\begin{align}
P= & E_{0}\left[\sum_{i=1}^{n}\sum_{t=0}^{T}\left\{ L_{it}\left(S_{it}R_{it}-S_{i,t+1}R_{i,t+1}\right)-B_{it}\left(S_{it}Q_{it}-S_{i,t+1}Q_{i,t+1}\right)\right\} -fT\right.\label{eq:22}\\
+ & \left.\sum_{i=1}^{n}\left(L_{i,T}S_{i,T+1}R_{i,T+1}-B_{i,T}S_{i,T+1}Q_{i,T+1}\right)\right]\nonumber 
\end{align}
\textcolor{black}{
\begin{align}
P & =E_{0}\left\{ \sum_{i=1}^{n}\left[\int_{0}^{T}\left\{ e^{-ru}L_{i}\left(u\right)\left[S_{i}\left(u\right)dR_{i}\left(u\right)+R_{i}\left(u\right)dS_{i}\left(u\right)+dS_{i}\left(u\right)dR_{i}\left(u\right)\right]\right\} \right.\right.\label{eq:24}\\
 & +\left.\left.\int_{0}^{T}\left\{ e^{-ru}B_{i}\left(u\right)\left[S_{i}\left(u\right)dQ_{i}\left(u\right)+Q_{i}\left(u\right)dS_{i}\left(u\right)+dS_{i}\left(u\right)dQ_{i}\left(u\right)\right]\right\} \right]\right\} -fT\nonumber 
\end{align}
}Here $P$ is the profits of the loan desk from the entire loan book
comprised of $n$ securities over the duration $T$. $f$ is the constant
fee to utilize the exclusive holdings. This is converted from the
payment made for the duration of the exclusive to apply on a daily
basis.
\item Another structure could consider the varying utilizations or shorting
needs of clients, reflecting the scenario when demand or the term
loan quantity is stochastic resulting in the valuation, $\upsilon^{stochastic\;demand}$.
This means the client can be given access to a loan facility and the
utilization of the loan could be changed at the discretion of the
client (Eq: \ref{eq:26}; \ref{eq:27}). Hence, the amount on the
term loan $H_{it}$ or the strike can be made to vary according to
another suitably defined GBM\footnote{\begin{doublespace}
Henderson \& Wojakowski (2002) give a symmetry result between the
floating and fixed-strike Asian options involving a change of numeraire
and time reversal of Brownian motion. Eberlein \& Papapantoleon (2005)
extend this result by considering a general Levy process as the driving
process of the underlying and prove a symmetry result for look-back
options using the same technique. Though in our case, both the strike
and the underlying process vary under different GBMs.
\end{doublespace}
}.
\begin{equation}
\upsilon^{stochastic\;demand}\approx E_{0}\left[\sum_{i=1}^{n}\sum_{t=0}^{T}\left\{ \left[\left(H_{it}-A_{it}\right)^{+}\right]\left(S_{it}-S_{i,t+1}\right)+\left(H_{it}\text{\ensuremath{\bigwedge}}A_{it}\right)\left[S_{it}Q_{it}-S_{i,t+1}Q_{i,t+1}\right]\right\} \right]\label{eq:26}
\end{equation}
\begin{align}
\upsilon^{stochastic\;demand} & =E_{0}\left\{ \sum_{i=1}^{n}\left[\int_{0}^{T}\left\{ e^{-ru}\left[H_{i}\left(u\right)-A_{i}\left(u\right)\right]^{+}\right\} dS_{i}\left(u\right)\right.\right.\label{eq:27}\\
 & +\left.\left.\int_{0}^{T}e^{-ru}\left[H_{i}\left(u\right)\text{\ensuremath{\bigwedge}}A_{i}\left(u\right)\right]\left\{ S_{i}\left(u\right)dQ_{i}\left(u\right)+Q_{i}\left(u\right)dS_{i}\left(u\right)+dS_{i}\left(u\right)dQ_{i}\left(u\right)\right\} \right]\right\} \nonumber 
\end{align}
\textcolor{black}{Considering the stochastic demands of a client,
acts as a natural segue to Section (\ref{sec:Inventory-Management-Application})
on the application of our methodology to inventory management.}
\end{doublespace}
\end{enumerate}
\begin{doublespace}

\section{\label{sec:Inventory-Management-Application}Application to Inventory
Management, Emissions Trading and Insurance Risk Mitigation}
\end{doublespace}

\begin{doublespace}
The methodology we have developed (Sections\textcolor{black}{{} \ref{subsec:Benchmark-Valuations};
\ref{subsec:Extremely-Exotic-Extensions}}) can be useful for supply
chain management, emissions trading and insurance risk mitigation.
We provided numerous examples of how our insights can be useful for
inventory management. We briefly outline how the main ideas can be
used for emissions trading and insurance risk mitigation. We hope
to develop these applications further in subsequent papers. The main
reason for including these examples right now are to illustrate the
wide variety of situations where our techniques be applied.
\end{doublespace}

\subsection{\label{subsec:Inventory-Management}Inventory Management}

\begin{doublespace}
Our methodology can aid in Inventory management by evaluating the
profit and loss over a certain time period. Specifically, the demand
process can be modelled as the underlying price and when it breaches
barriers indicative of the level of supply or inventory, cash-flows
(costs or revenues) can be accumulated over a certain duration. This
can give an estimate of the expected profitability of entering a business,
starting a new venture or maintaining a certain level of inventory.
If the payoffs are used to structure options with other parties, care
needs to be taken to ensure that the retailer or other participants
are not gaming the system and manipulating the demand to meet the
terms of the contract. Khouja (1999); Qin et al. (2011) provide a
comprehensive review of the news-vendor problem including suggestions
for future research. Wang \& Chen (2017) look at option pricing policies
with regards to fresh produce supply chain.

Let $Q_{t}$ represent the supply received (quantity ordered) by the
retailer before time, $t$ with the requirement to sell it before
the next time period, $t+1$ (Eq: \ref{eq:29}). Retailer faces stochastic
demand, $D_{t}$ for the corresponding time period, which can be modelled
using suitable GBMs, perhaps more accurately with jumps. Let the unit
cost to manufacture the product be $c$. The wholesale price at which
manufacturer produces and sells to the retailer be $w$. The salvage
value of any unsold product is $s$ per unit and the stock-out cost
of unsatisfied demand is $r$ per unit. The final price at which retailer
sells is $p$. To avoid unrealistic and trivial cases, we assume that
$0<s<w<p$ , $p>c$ and $0<r$. The profit and loss, $P_{R}$, over
a time period, $t=0$ to $t=T$ can be calculated as,
\begin{equation}
P_{R}=E_{0}\left[\sum_{t=0}^{T}\left\{ \left(D_{t}\text{\ensuremath{\bigwedge}}Q_{t}\right)p-Q_{t}w-\left[\left(D_{t}-Q_{t}\right)^{+}\right]r+\left[\left(D_{t}-Q_{t}\right)^{-}\right]s\right\} \right]\label{eq:29}
\end{equation}
The approach of optimizing the quantity ordered to maximize expected
payoff is well known. Instead or in addition, the retailer could structure
the payoffs when the demand breaches barriers by viewing payoffs as
suitable option contracts. \textcolor{black}{The treatment from Sections
(\ref{subsec:Benchmark-Valuations} ; \ref{subsec:Extremely-Exotic-Extensions})
can apply with some modifications as below}. The inventory management
valuations are denoted similar to the terminology we have used in
Sections \textcolor{black}{(\ref{subsec:Benchmark-Valuations}; \ref{subsec:Extremely-Exotic-Extensions})
but} with an extra suffix $inv$ that denotes that these are for inventory
management, $\upsilon_{inv}^{XXXXX}$.
\end{doublespace}
\begin{enumerate}
\begin{doublespace}
\item If demand falls below a certain threshold, $H$, that is, if there
is a breach $\left(D_{t}\leq H\right)\;;\;0\leq t\leq T$ in a particular
time period within the full duration, the retailer incurs a loss or
payoff $K$. This covers the case wherein a certain quantity is ordered
for all the time periods over a predetermined horizon, perhaps with
contractual commitment. An additional cost (storage, disposal or salvage
value less than wholesale price) is involved when there is excess
inventory on any particular time period within the full horizon of
the contract (Eq: \ref{eq:30}). This applies only the first time
the breach happens and is similar to an American down-and-in binary
put option yielding the below formulation, \textcolor{black}{
\begin{equation}
\upsilon_{inv}^{constant}=E_{0}\left[K\boldsymbol{I}\left(D_{t}<H\right)\right]\;;\;t=\inf\left\{ \;0\leq u\leq T\;\mid\;\left(D_{u}\leq H\right)\;\right\} \label{eq:30}
\end{equation}
}
\item A single payoff proportional to the time left on the deal (Eq: \ref{eq:31}).
This payoff could be before the supply contract can be renegotiated
and could be related to the interest rates. This proportionality is
represented as $r\left(T-t\right)/T$ or $K\left(T-t\right)/T$, if
there is a breach $\left(D_{t}\leq H\right)\;;\;0\leq t\leq T$. Again,
this is valid only when the first breach happens.\textcolor{black}{
\begin{equation}
\upsilon_{inv}^{proportional\;time}=E_{0}\left[K\frac{\left(T-t\right)}{T}\boldsymbol{I}\left(D_{t}<H\right)\right]\;;\;t=\inf\left\{ \;0\leq u\leq T\;\mid\;\left(D_{u}\leq H\right)\;\right\} \label{eq:31}
\end{equation}
The scenarios where demand goes above a certain upper limit giving
rise to costs, }$\tilde{K}$,\textcolor{black}{{} from loss of goodwill
due to shortages etc. can be handled analogously.}
\item Other extensions can cover scenarios where c\textcolor{black}{osts
}$K,\tilde{K},$ \textcolor{black}{are incurred every-time} \textcolor{black}{there
is a demand breach in the downward or upward direction }(Eq: \ref{eq:32};
\ref{eq:33}) \textcolor{black}{respectively,}
\begin{align}
\text{Up State / Down Breach} & \equiv\text{Up}_{t}\equiv\frac{\max\left(D_{t}-H,0\right)}{\left(D_{t}-H\right)}=\begin{cases}
1\;\text{if}\; & D_{t}\geq H\\
0 & \text{Otherwise}
\end{cases}\label{eq:32}\\
\text{Down State / Up Breach}\equiv & \text{Down}_{t}\equiv\frac{\max\left(H-D_{t},0\right)}{\left(H-D_{t}\right)}=\begin{cases}
1\;\text{if}\; & D_{t}\leq H\\
0 & \text{Otherwise}
\end{cases}\label{eq:33}
\end{align}
The valuation expressions that capture a constant (Eq: \ref{eq:34})
or proportional payoff (Eq: \ref{eq:36}) every-time the barrier is
passed are given by,
\end{doublespace}

\begin{doublespace}
\begin{align}
\upsilon_{inv}^{constant\;counter} & =E_{0}\left[K\left(1-\text{Up}_{t}\right)+K\sum_{t=0}^{T-1}\left\lfloor \frac{1+\text{Up}_{t}-\text{Up}_{t+1}}{2}\right\rfloor \right.\label{eq:34}\\
- & \left.\tilde{K}\sum_{t=0}^{T-1}\left\lfloor \frac{1-\text{Up}_{t}+\text{Up}_{t+1}}{2}\right\rfloor \right]\nonumber 
\end{align}
\begin{align}
\upsilon_{inv}^{proportional\;counter} & =E_{0}\left[K\left(1-\text{Up}_{t}\right)+K\sum_{t=0}^{T-1}\frac{\left(T-t-1\right)}{T}\left\lfloor \frac{1+\text{Up}_{t}-\text{Up}_{t+1}}{2}\right\rfloor \right.\label{eq:36}\\
- & \left.\tilde{K}\sum_{t=0}^{T-1}\frac{\left(T-t-1\right)}{T}\left\lfloor \frac{1-\text{Up}_{t}+\text{Up}_{t+1}}{2}\right\rfloor \right]\nonumber 
\end{align}

\end{doublespace}
\begin{doublespace}
\item In the above cases, the costs are time invariant or proportional to
time, without any dependence on the demand process or the level of
inventory. All the cost variables can be set as stochastic processes
with a slight abuse of notation or by assuming some of them hold at
the start and others at end of each time period (Eq: \ref{eq:38};
\ref{eq:39}), giving, 
\begin{equation}
\upsilon_{inv}^{stochastic\;costs}=E_{0}\left[\sum_{t=0}^{T}\left\{ \left(D_{t}\text{\ensuremath{\bigwedge}}Q_{t}\right)p_{t}-Q_{t}w_{t}-\left[\left(D_{t}-Q_{t}\right)^{+}\right]r_{t}+\left[\left(D_{t}-Q_{t}\right)^{-}\right]s_{t}\right\} \right]\label{eq:38}
\end{equation}
\begin{align}
\upsilon_{inv}^{stochastic\;costs} & =E_{0}\left[\int_{0}^{T}e^{-ru}\left\{ \left[D\left(u\right)\text{\ensuremath{\bigwedge}}Q\left(u\right)\right]p\left(u\right)-Q\left(u\right)w\left(u\right)\right.\right.\label{eq:39}\\
- & \left.\left.\left(\left[D\left(u\right)-Q\left(u\right)\right]^{+}\right)r\left(u\right)+\left(\left[D\left(u\right)-Q\left(u\right)\right]^{-}\right)s\left(u\right)\right\} du\vphantom{\int_{0}^{T}\frac{e^{-ru}}{e^{-ru}}}\right]\nonumber 
\end{align}
\end{doublespace}
\end{enumerate}
\begin{doublespace}
Increasing globalization has meant that firms looking for growth might
need to have business partners in other countries, which brings with
it many elements of uncertainty and the necessity of better risk management
tools (Rugman 1976; Mascarenhas 1982; Miller 1992; Reeb, Kwok \& Baek
1998). The techniques in this section can be helpful for firms to
manage contract risks as they negotiate terms with suppliers, distributors,
make investments in different parts of the world and seek to display
socially responsible behavior (Levy 1995; Choi, Lee \& Kim 1999; Celly,
Spekman \& Kamauff 1999; Strike, Gao \& Bansal 2006). 
\end{doublespace}
\begin{doublespace}

\subsection{\label{sec:Applications-to-Emissions} Emissions Trading }
\end{doublespace}

\begin{doublespace}
In recent years, there has been a surge in interest in emissions trading.
Rubin (1996) provides a general treatment of emission trading, banking,
and borrowing in an inter-temporal, continuous-time model using optimal-control
theory. He shows that an efficient equilibrium solution exists that
achieves the least cost, which is the solution attained by a social
planner who knows the cost functions of all firms. Daskalakis, Psychoyios
\& Markellos (2009) study the three main markets for emission allowances
within the European Union Emissions Trading Scheme. They develop an
empirically and theoretically valid framework for the pricing and
hedging of intra-phase and inter-phase futures and options on futures,
respectively (also: Bayer \& Aklin 2020; Gladwin \& Walter 1976).
Wen, Wu \& Gong (2020) analyze the impact of carbon emissions’ environmental
regulation on the stock returns of companies. Narassimhan et al. (2018)
is a comprehensive review of the implementation of emissions trading
systems in various jurisdictions across the globe.

We relate our methodology (Sections\textcolor{black}{{} \ref{subsec:Benchmark-Valuations};
\ref{subsec:Extremely-Exotic-Extensions}}) to emissions trading indirectly
through the demand process. Higher demand, for certain products, will
generate higher levels of emissions due to the corresponding dependency
on manufacturing the required output. Option contracts, based on the
demand process, can be structured depending on whether certain upper
limits of demand will be breached. Hence, options that breach certain
upper limits of demand will indicate when emission thresholds might
be breached. This will indicate whether any payments might need to
be made or emissions contracts can be put in place to handle the outcomes
accordingly. A direct approach can be along the lines of Chaabane,
Ramudhin \& Paquet (2012), who present a model that explicitly considers
environmental costs and can assist decision makers in designing sustainable
supply chains over their entire life cycle. This facilitates the understanding
of optimal supply chain strategies under different environmental policies
for recycling and green house gas emissions reductions.
\end{doublespace}
\begin{doublespace}

\subsection{\label{subsec:Insurance-Risk-Mitigation}Insurance Risk Mitigation }
\end{doublespace}

\begin{doublespace}
Anastasiadis \& Chukova (2012) is a comprehensive literature review
that summarizes the results from a collection of research papers that
relate to modeling insurance claims and the processes associated with
them. Kliger \& Levikson (1998) use the demand for insurance to find
the optimal premium an insurer should charge. Brockett \& Xiaohua
(1997) review the applications of operations research methods in the
insurance industry. Mills (2005; 2012) look at the role insurance
firms need to play in managing climate change risk. Michel-Kerjan
\& Kunreuther (2011); Michaels et al. (1997). suggest that government
programs, scientific institutions and private insurers have to work
together to provide flood and other disaster insurance since the losses
from large catastrophes can deplete the capital pools that reinsurance
companies hold.

The amount of insurance claims any insurer faces in a given period
will be the variable that dictates how much reinsurance coverage is
necessary. The amount of claims will be modeled using our methodology
presented in Sections (\ref{subsec:Extremely-Exotic-Extensions};
\ref{sec:Inventory-Management-Application};\ref{sec:Applications-to-Emissions}).
Higher levels of claims will need higher coverage. Option contracts
can be structured that will depend on the whether certain upper limits
on insurance claims are breached. Payments can be put in place depending
on whether the corresponding options contracts cross any thresholds
or boundaries of claims made in a given time period. In a similar
way, asset and liability management can be done based on how the net
amounts of assets and liabilities change over time. If certain thresholds
are breached, option contracts related to certain monetary amounts
can be in put place. This will ensure that asset and liability mismatches
can be handled with less discrepancies. Since our methodology is built
using many features of binary options, catastrophic events can be
handled depending on whether particular disaster related incidents
occur.
\end{doublespace}
\begin{doublespace}

\section{\label{sec:Lost-in-Estimation:}Lost in Estimation: Option Valuation
using Historical Time Series}
\end{doublespace}

\begin{doublespace}
As a complementary technique to performing a standard simulation based
valuation, we show a way to utilize the rich historical data-set at
the disposal of the intermediary. As it will become clear, this technique
can be used to price many kinds of derivative contracts and has wide
application beyond pricing term loans. Given the complexity and the
number of variables to be estimated, a simple transformation provides
a heuristic to calculate the payoff from the historical time series
of each of the variables. This can then be used as a possible guide
to the calculation of the term loan rate. The initial values of the
stochastic variables in the historical data-set, as well as the barrier
level or strike as necessary, can be scaled by a constant to match
the values of the corresponding variables at the start of the option
contract. This gives us a path of the evolution of the stochastic
process, the GBM in our case. Depending on the amount of historical
data, the historical series is split into partitions and the valuation
is done on each portion and then averaged. 

Calculating the value of the option contract directly ensures that
nothing is lost in translation by first estimating the parameters
of a distribution and then using those to perform the valuation. A
key assumption made is that the GBM process has stationary moments
or that the volatility is not time varying. Clearly this is different
from the change of numeraire technique (Geman, El-Karoui \& Rochet
1995; Benninga, Björk \& Wiener 2002) since a numeraire is a traded
asset and a constant is not, unless interest rates are zero, otherwise
there will be risk-less arbitrage opportunities. An open question
is regarding the convergence of the valuation using the historical
time series (Cowles \& Carlin 1996; Broadie \& Glasserman 1997; Sherman,
Ho \& Dalal 1999; Shapiro \& Homem-de-Mello 2000; Stentoft 2004).
Convergence can be improved by making more partitions, including overlapping
samples of the historical series and applying this method to each
portion. The efficiency of this new approach with overlapping paths,
especially in the presence of jumps in the stochastic process, needs
to compared to the efficacy of standard Monte Carlo variance reduction
techniques (Boyle, Broadie \& Glasserman 1997; Glasserman 2003).

In our case, the scaling constants are chosen such that the starting
value of the historical availability and stock price time series match
the corresponding values at the inception of the contract (Eq: \ref{eq:41};
\ref{eq:42}). $\theta_{ij},\lambda_{ij}$ are the scaling constants
for security $i$ in the partition $j$ for the availability and stock
price time series. $p$ is the number of partitions of the historical
time series for option valuation. $T_{js}$ and $T_{je}$ are the
start and end times of the $j^{th}$ historical time series. The amount
of the term loan or the strike is not scaled in the structures we
have discussed, but might be necessary in other situations. Using
this the historical valuation for the cost of holding a stock proportional
to the extent of shortfall, $\upsilon^{stock\;holding\;historical}$,
is shown below,

\begin{equation}
\upsilon^{stock\;holding\;historical}=\frac{1}{p}\sum_{j=1}^{p}\sum_{i=1}^{n}\sum_{t=-T_{js}}^{-T_{je}}\left[\left(H_{i}-\hat{A}_{it}\right)^{+}\right]\left(\hat{S}_{it}-\hat{S}_{i,t+1}\right)\label{eq:41}
\end{equation}
\begin{equation}
\text{Here, }\theta_{ij}=\frac{A_{i0}}{A_{iT_{js}}};\;\lambda_{ij}=\frac{S_{i0}}{S_{iT_{js}}};\;\hat{A}_{it}=\theta_{ij}A_{it};\;\hat{S}_{it}=\lambda_{ij}S_{it}\label{eq:42}
\end{equation}

All the different valuations structures, 

\textcolor{black}{$\left\{ \upsilon^{constant},\upsilon^{proportional\;time},\upsilon^{constant\;counter},\right.$
$\left.\upsilon^{proportional\;counter},\upsilon^{stock\;holding},\upsilon^{borrow\;rate},\upsilon^{stochastic\;demand}\right\} $,}
can be calculated using the historical approach. This approach also
allows the calculation of the time series of the daily profits that
would accrue to the intermediary firm. The volatility of the daily
profits can be suggestive in terms of how aggressive one should be
in picking one of the various alternative loan structures.

We give an illustration of how different paths can be generated from
a single historical time series and used to value options akin to
a simulation based procedure. Figure (\ref{fig:Full-HistoricalTime-Series})
shows the full time series that is available. This includes the starting
value, the drift and the volatility, which were also chosen randomly
from suitable uniform distributions, since we are simulating this
series. The full series has about 2590 observations, equivalent to
say ten years of historical data. In Figure (\ref{fig:Mutiple-Paths})
we show multiple paths that have been generated by splitting the overall
series into smaller portions and scaling the starting value of each
sub-portion to coincide with the starting value of our underlying
process. In Figure (\ref{fig:No-Overlap}) we split the full series
into ten non-overlapping portions and in Figure (\ref{fig:With-Overlap})
we split the full series into nineteen paths with almost half of each
sub-portion overlapping with one of the other sub-portions. In both
the overlapping and non-overlapping cases, each smaller series has
259 observations. We see that despite the overlapping, we have a rich
set of paths that captures the jumps or movements inherent in the
true process.

\begin{figure}[H]
\subfloat[Full Series]{\includegraphics[width=16.7cm]{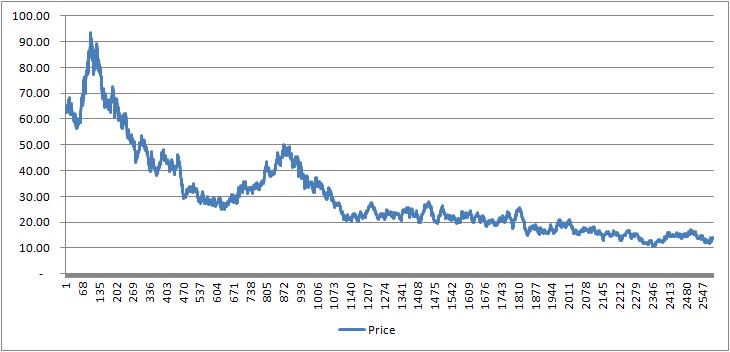}

}

\hfill{}\subfloat[Start Value, Drift and Volatility]{\includegraphics{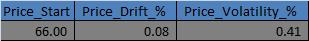}

}\caption{Full Historical Time Series\label{fig:Full-HistoricalTime-Series}}
\end{figure}

\begin{figure}[H]
\subfloat[No Overlap\label{fig:No-Overlap}]{\includegraphics[width=16.7cm]{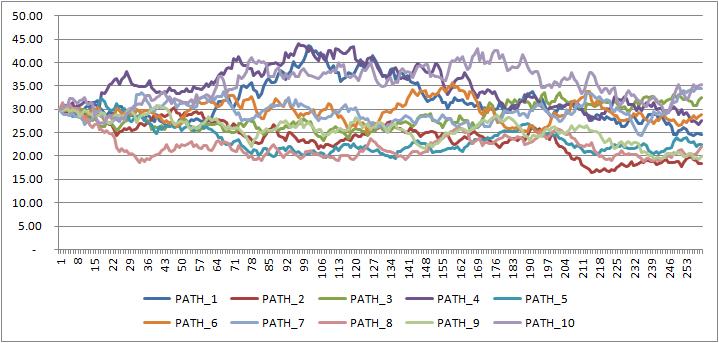}

}

\hfill{}\subfloat[With Overlap\label{fig:With-Overlap}]{\includegraphics[width=16.7cm]{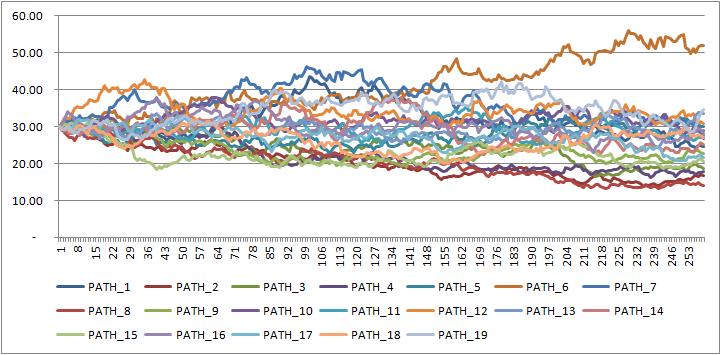}

}\caption{Multiple Paths\label{fig:Mutiple-Paths}}
\end{figure}

\end{doublespace}
\begin{doublespace}

\subsection{\label{subsec:Sample-Data-Generation}Sample Data Generation}
\end{doublespace}

\begin{doublespace}
\textcolor{black}{A typical intermediary can have positions on anywhere
from a few hundred to upwards of a few thousand different securities
and many years of historical data. It is therefore, a good complement
to use the historical time series and calculate the valuation from
the corresponding formulae derived in Sections (\ref{subsec:Benchmark-Valuations};
\ref{subsec:Extremely-Exotic-Extensions}). To demonstrate how this
technique would work, we simulate the historical time series. Also,
the term loan buyer (as opposed to the intermediary, who would be
the writer of the contract) is unlikely to have access to the full
historical time series (but might have the time series of loan rates
and availability) and hence could simulate the variables for which
the historical data is absent as shown in this section to come up
with a valuation. We create one hundred different hypothetical securities
with the same starting value, volatility and drift and come up with
one hundred different paths and hence the different hypothetical time
series of all the variables involved (Price, Availability, Quantity
Borrowed, Exclusive Holding, Inventory Level, Loan Rate, Alternate
Loan Rate) by sampling from suitable log normal distributions. It
is worth noting that the starting value, mean and standard deviation
of the time series are themselves simulations from other appropriately
chosen uniform distributions (Figure \ref{fig:Simulation-Seed}).
The locate process can be modelled as a Poisson distribution with
appropriately chosen units. It is simpler to consider it as the absolute
value of a normal distribution. The mean and standard deviation of
the locate distribution for each security are chosen from another
appropriately chosen uniform distribution.}

The simulation seed is chosen so that the drift and volatility we
get for the variables (mean and standard deviation for the locate
process) are similar to what would be observed in practice. For example
in Figure (\ref{fig:Simulation-Seed}), the price and rate volatility
are lower than the volatilities of the borrow and other quantities,
which tend to be much higher; the range of the drift for the quantities
is also higher as compared to the drift range of prices and rates.
 This ensures that we are keeping it as close to a realistic setting
as possible, without having access to an actual historical time series.
The volatility and drift of the variables for each security are shown
in Figure (\ref{fig:Simulation-Sample-Distributions}). The length
of the simulated time series is one year or 252 trading days for each
security. A sample of the time series of the variables generated using
the simulated drift and volatility parameters is shown in Figure (\ref{fig:Simulation-Sample-Time}).
The full time series shown below is available upon request.

\begin{figure}[H]
\includegraphics[width=7cm]{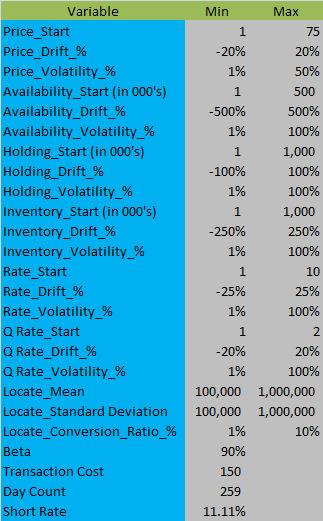}

\caption{Simulation Seed\label{fig:Simulation-Seed}}
\end{figure}
\begin{figure}[H]
\includegraphics[width=15cm]{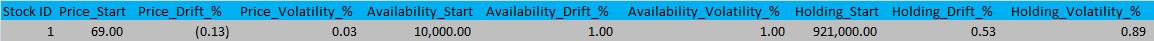}\medskip{}

\includegraphics[width=16.7cm]{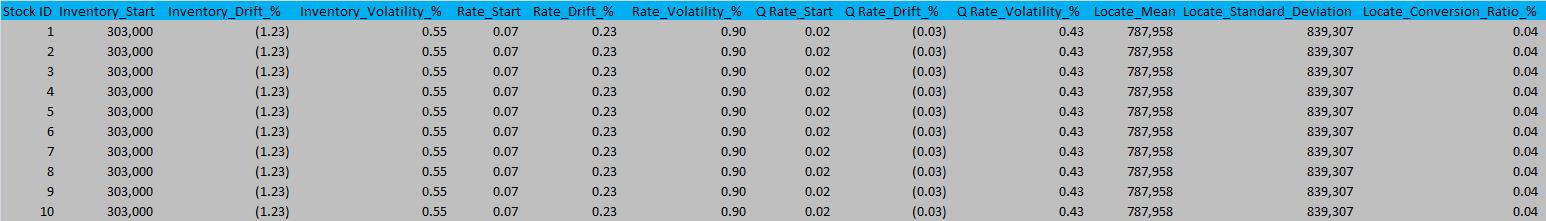}

\caption{Simulation Sample Distributions\label{fig:Simulation-Sample-Distributions}}
\end{figure}
\begin{figure}[H]
\includegraphics[width=8cm]{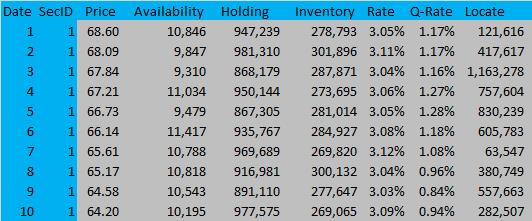}\enskip{}\includegraphics[width=8cm]{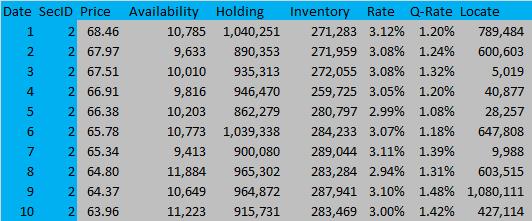}

\caption{Simulation Sample Time Series\label{fig:Simulation-Sample-Time}}
\end{figure}

\end{doublespace}
\begin{doublespace}

\section{\label{sec:Simulate-and-Accumulate}Simulate and Accumulate}
\end{doublespace}
\begin{doublespace}

\subsection{\label{subsec:Closed-Form-Benchmark}Closed Form Benchmark Valuation}
\end{doublespace}

\begin{doublespace}
We first show a grid with the results for the American binary put
closed form valuation using Equation (\ref{eq:8}) in Figure (\ref{fig:American-Binary-Put})
in currency terms. As expected, the results seem to confirm well known
properties such as; longer expiration date, higher volatility, starting
value and strike (term loan amount) being closer, lower drift away
from the strike; lead to higher valuations. For example: when the
availability starting value, availability drift, availability volatility,
term loan amount, payment when barrier is breached and maturity date
are 10000, 100\%, 100\%, 5000, 1000 and 1 year respectively, the valuation
is 225.83 in terms of currency units. This is our baseline scenario
and we vary one variable at a time and keep the other variables at
these baseline values. When the maturity increases to 2 years the
valuation is 246.16. When the availability volatility increases across
10\%, 50\%, 100\% and 150\% the valuation increases from 0, 3.43,
225.83 and 508.56 units respectively. When the availability staring
value decreases to 8000 and 6000 the valuation increases to 370.51
and 694.81 respectively. When the availability drift changes across
50\%, 100\% and 200\% the valuation varies across 400.69, 225.83 and
61.70 respectively. This suggests that the valuation is most sensitive
to the volatility and starting value (with reference to the strike)
of the underlying process, though further analysis needs to be carried
out to find out the exact sensitivity of the valuation to changes
in the different parameters.

\begin{figure}[H]
\includegraphics[width=16.7cm]{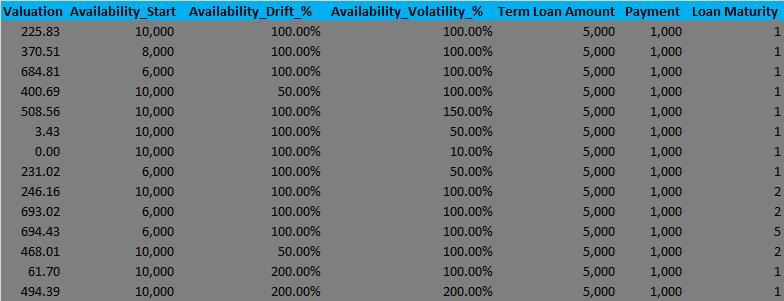}

\caption{American Binary Put Closed Form Valuations\label{fig:American-Binary-Put}}
\end{figure}

\end{doublespace}
\begin{doublespace}

\subsection{\label{subsec:Valuation-Matrix}Valuation Matrix}
\end{doublespace}

\begin{doublespace}
As noted earlier, given the complexity of the system and the number
of random variables involved, the computational infrastructure required
to value term loans would require a Monte-Carlo engine and can be
tremendous; though, such a framework is readily accessible since most
existing intermediaries have this setup as part of their derivative
desks. We build a simple Monte-Carlo framework in R (Paradis 2002;
Matloff 2011; Venables \& Smith 2016). To try different scenarios,
each of the variables (availability drift / start value / volatility,
interest rate, payment up / down and expiration date) are increased
and decreased by 10\% for five steps to get a maximum increase (decrease)
of 150\% (50\%) of the initial values of the corresponding variables.
We also repeat each of the variable value change scenarios with different
number of iterations of the Monte-Carlo sampling, ranging from 5000
to 50,000 with an increase of 5000 iterations each time. This results
in a total of 611 different scenarios for each valuation.

Figures (\ref{fig:Simulation-Results-with-Iteration}; \ref{fig:Simulation-Results-with-Availability-Volatility};
\ref{fig:Simulation-Results-with-Maturity-Date}) show the valuations
when the number of iterations, the availability volatility and the
maturity date are varied. The columns: Constant, Const\_Disc, Proportional,
Prop\_Disc, Up\_Down, Up\_Down\_Prop, Holding, Borrow denote the valuations
corresponding to: a constant payoff, a constant payoff discounted
back to the starting time, a payoff proportional to the time left
on the contract, a payoff proportional to the time left on the contract
discounted back to the starting time, cash flow accruals when there
is a breach in the upward or downward directions, cash flow accruals
proportional to the time left on the contract when there is a breach
in the upward or downward directions, the cost of holding stock positions
to counter any drop in the availability below the contract amount,
and when the borrow rates on the securities are considered in the
valuation. Figure (\ref{fig:Values-of-Other-Variables}) shows the
values of the other variables for the sample valuations shown. These
values for the other variables, act as the baseline, are varied one
variable at a time to create the complete valuation matrix.

As seen in Figure (\ref{fig:Simulation-Results-with-Availability-Volatility})
when the availability volatility increases, across the range 50\%
to 150\% in steps of 10\%, the valuation, corresponding to a payoff
proportional to the time left on the contract discounted back to the
starting time, increases across 0.0 to 560.4. The other valuations
show similar increases. As seen in Figure (\ref{fig:Simulation-Results-with-Maturity-Date})
when the maturity increases, from 0.5 to 0.9 in steps of 0.1 and from
2 to 6 in steps of 1, the valuation increases from 91.03 to 165.91
and from 254.6 to 355.75 units respectively. When we compare the constant
payoff discounted back to the starting time and the payoff proportional
to the time left on the contract discounted back to the starting time,
we see that the proportional payoff is lower across all maturities
and availability volatility levels. But as the maturity dates or the
availability volatility increases, the difference between the two
valuations decreases (we check this by taking the ratio of the difference
between the two valuations and the proportional payoff valuation).
This suggests that it is less risky to offer the contract with a proportional
payoff compared to the constant payoff. For shorter maturities and
when availability has lower volatility the difference in valuations
is greater (hence the risk is greater) and the payoff proportional
to the time left on the contract is to be preferred.

We have utilized the models developed here to price long term securities
lending loans using industry standard Monte-Carlo pricing engines.
As compared to the sample model, illustrated with simulated data in
this paper, we get somewhat similar results on a heavy duty option
pricing engine with actual loan rates and availability inputs. Despite
the simplicity of our calculation engine, we build some intelligence
such that it can pick up and resume the computations from the last
point where it was stopped. This is useful since the total computer
time required to perform these valuations was around 56 hours and
any errors that could abruptly stop the calculations would mean having
to repeat from scratch, an extremely time consuming process. The full
matrix of results is available upon request. 

\begin{figure}[H]
\includegraphics[width=16.7cm]{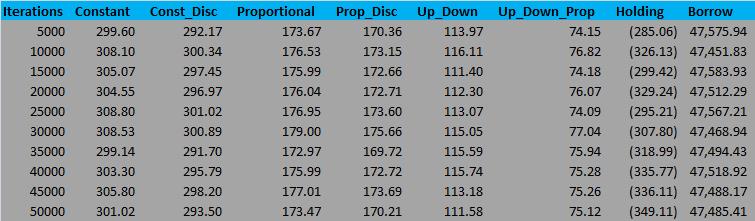}

\caption{Simulation Results with Iteration Changes\label{fig:Simulation-Results-with-Iteration}}
\end{figure}

\begin{figure}[H]
\includegraphics[width=16.7cm]{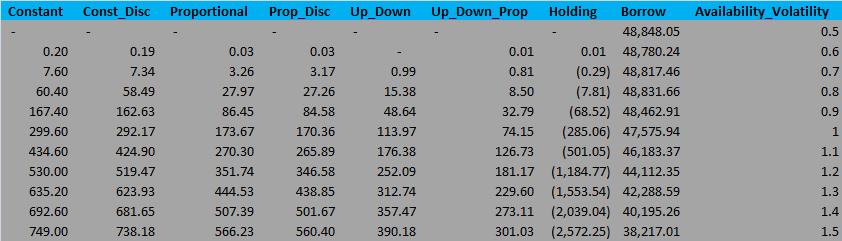}

\caption{Simulation Results with Availability Volatility Changes\label{fig:Simulation-Results-with-Availability-Volatility}}
\end{figure}

\begin{figure}[H]
\includegraphics[width=16.7cm]{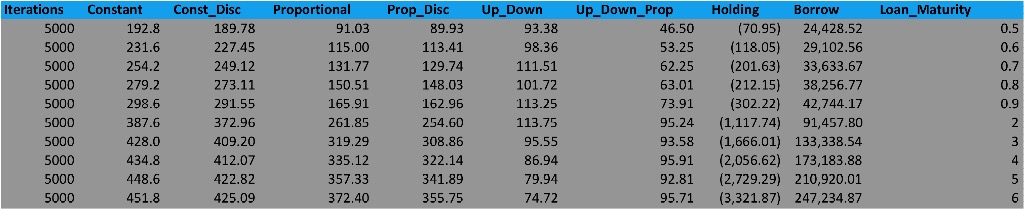}

\caption{Term Loan Valuation, Simulation Results with Maturity Date Changes\label{fig:Simulation-Results-with-Maturity-Date}}
\end{figure}
\begin{figure}[H]
\includegraphics[width=16.7cm]{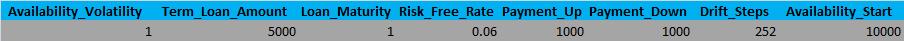}\medskip{}
\includegraphics[width=16.7cm]{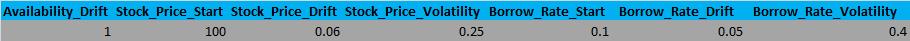}

\caption{Values of other Variables\label{fig:Values-of-Other-Variables}}
\end{figure}

\end{doublespace}
\begin{doublespace}

\section{\label{sec:Improvements-to-the}Improvements to the Model}
\end{doublespace}

\begin{doublespace}
Numerous improvements to the models are possible. Cobb et al. (2007)
derive approximate distributions for the sum of log normal distributions
which highlight that we can estimate the log normal parameters from
the time series of the individual components and hence get the mean
and variance of the availability. Xu (2006) attempts to extend the
complete market option pricing theory to incomplete markets, which
is a more accurate picture of securities lending, since we do not
have assets available for every state of the world. 

An important extension can be to introduce jumps in the stochastic
processes (Merton 1976; Kou 2002). This is seen in stock prices\footnote{\begin{doublespace}
Yan (2011) demonstrates that expected stock return should be monotonically
decreasing in average stock jump size, which can be proxied by the
slope of option implied volatility smile.
\end{doublespace}
} to a certain extent and to a greater extent in the borrow, exclusive
holdings, availability and inventory processes.  Another improvement
would be to use stochastic volatility for the stock prices. The SABR
model, a stochastic volatility model in which the asset price and
volatility are correlated, is discussed in Hagan et al. (2002); Hagan,
Lesniewski \& Woodward (2015). This was developed to overcome the
deficiencies of local volatility models, (Dupire 1994; 1997; Derman
\& Kani 1994; 1998). Gatheral (2011) is an excellent reference for
modeling volatility surfaces. West (2005) considers the SABR model
for illiquid securities, which can be applied to the other processes
in our system.

We have assumed that the stochastic processes governing the stocks,
the loan rates, the number of shares available internally, as well
as the number of shares available for loan from other external borrowers
are all independent. A more realistic assumption of positive correlations
between some of these processes would make the results more realistic
and appealing. Such a modification can be incorporated into our framework
but would require non trivial alternations to the equations and numerical
computations (Heston 1993; Oksendal 2013). Specifically we have assumed
that the process for our most crucial variable, the total availability,
$A_{it}$, expressed as the sum of the other GBMs, is also a GBM.
Even without the presence of correlation between the components, the
process for the total availability might need to modelled differently
(Dufresne 2004; Lo 2012; Nie \& Chen 2007). The main aspect we would
like to emphasize is that we have preferred simpler models to convey
the insights. Incorporating more complex assumptions and models is
a relatively straightforward matter depending on the needs of the
situation and the efforts that can be expended towards the expected
outcomes. That being said, simpler models are to be preferred since
it is easier to follow the intuition and also to see how the results
change based on the inputs and the structure of the variable dependencies.

A key variable we have kept constant is the interest rate. There is
a huge literate on modeling the term structure of short rates (Vasicek
1977; Cox, Ingersoll \& Ross 1985; Ho \& Lee 1986; Hull \& White 1990,
1993; Litterman \& Scheinkman 1991; Heath, Jarrow \& Morton 1992;
Ritchken \& Sankarasubramanian 1995; Brace \& Musiela 1997; Moraleda
\& Vorst 1997; Clewlow \& Strickland 1998). Stock price volatility
surfaces and interest rate processes are part of the standard setup
at most derivative trading desks and can be readily used where available.

The availability process and other processes that follow a GBM but
represent integer valued variables such as number of shares need to
be rounded down to the nearest integer in each time period. For simplicity,
we have ignored this since this would not affect the valuations significantly.
A longer historical time series will help get better estimates for
the volatility of the availability since this is a key factor governing
the uncertainty in the term loan rates. In the numerical results,
we hope to extend the valuation scenarios to include stock price and
loan rate related changes.

A significant amount of analytical work can be done regarding the
use of historical series to get multiple paths for the stochastic
process to perform the valuation. Generally, valuations are done under
the risk-neutral measure to avoid arbitrage, but historical time series
are from the physical measure or the probabilities and associated
values that could actually happen in the real world. Therefore, pricing
based on historical time series may not be arbitrage free. Another
issue that should be considered is that using non-overlapping portions
to price the same instrument creates dependence among the non-overlapping
portions (connects price series from two far away dates, which could
have different variances and other properties due to structural changes),
which could make the price estimator more variable. Among the various
time series properties that could be studied, we could examine any
redundancy created by different levels of the extent of overlap by
comparing it to the no overlap scenario. These checks should gauge
whether the variance of the estimation increases while using the historical
time series and if so try to establish under what conditions it would
yield better results than a standard Monte Carlo simulation.

We have provided some hints regarding the sensitivity of the valuations
to changes in the parameters values and this needs to be investigated
further. We have briefly mentioned the possibility of hedging the
term loan using correlated assets. \textcolor{black}{A key open question
is to decide which of the valuations to use when writing a term loan
when there are no contractual differences to the client but the valuation
simply uses a more complex scenario. This aspect will require views
on how the loan rates might evolve and which securities in the exclusive
pool will stay special or might become special, and hence can be used
to pick either a more aggressive or a less aggressive valuation. In
a subsequent paper, we will look at how we can systematically try
and establish expectations on loan rates and which securities might
become harder to borrow and hence provide higher profit margins on
the loans. Efforts to gauge future demand can be aided by looking
at the locate requests seen by the desk. More locate requests could
indicate a future strain on the availability process keeping in mind
that the locate process is noisy with low locate to loan conversation
ratios. The locate conversion ratio can also be the result of profit
maximization when the Knapsack algorithm }(Martello \& Toth 1987)\textcolor{black}{{}
is used to allocate the locates.}
\end{doublespace}
\begin{doublespace}

\section{\label{sec:Conclusion}Conclusion}
\end{doublespace}

\begin{doublespace}
The paper presents a theoretical foundation supplemented with numerical
results for a largely unexplored financial business. We have developed
models to price long term loans in the securities lending business.
These longer horizon deals can be viewed as contracts with options
embedded in them and can be priced using established methods from
derivatives theory. We have derived the payoff functions for many
structures that can act as term loans. We have provided numerous references
to further explore techniques for solving these and derived the closed
form solutions where such a formulation exists and in situations where
approximations and numerical solutions would be required we have provided
those. The results from the simulation confirm the complexity inherent
in the system, but point out that the techniques we have used can
be a practical tool for the participants (financial intermediaries
and investment firms) to maximize their profits and to improve the
efficiency of the asset management process.

This methodology could also be used to calculate the profit and loss
for inventory management, assuming that the demand process is like
the availability and a payoff or cost occurs every time a barrier
is breached. Similarly it could be potentially useful for dealing
with many forms of uncertainty related to other financial instruments
and even non-financial commodities. This has significant implications
for businesses of all sizes and can become an important risk management
tool in scenarios where there are major risks to the supplier or distribution
channels, such as for multinational firms.

We have developed a heuristic that can mitigate the loss of information
that sets in, when parameters are estimated first and then the valuation
is performed, by directly calculating the valuation using the historical
time series. By first doing an estimation and then using the estimates
for final calculations, we are potentially introducing multiple levels
of errors. Direct use of the historical time series can lead to better
usage of models since the results are more accurate leading to greater
financial stability in the overall system.
\end{doublespace}
\begin{doublespace}

\section{References}
\end{doublespace}
\begin{enumerate}
\begin{doublespace}
\item Adner, R., \& Levinthal, D. A. (2004). What is not a real option:
Considering boundaries for the application of real options to business
strategy. Academy of management review, 29(1), 74
\item Alexander, C., \& Venkatramanan, A. (2012). Analytic Approximations
for Multi‐Asset Option Pricing. Mathematical Finance, 22(4), 667-689.
\item Andersen, L., \& Broadie, M. (2004). Primal-dual simulation algorithm
for pricing multidimensional American options. Management Science,
50(9), 1222-1234.
\item Anastasiadis, S., \& Chukova, S. (2012). Multivariate insurance models:
an overview. Insurance: Mathematics and Economics, 51(1), 222-227. 
\item Azimzadeh, P. (2015). Closed-form expressions for perpetual and finite-maturity
American binary options. See: \href{http://parsiad.ca/blog/2015/pricing-american-binary-options/}{Closed Form Expressions for Perpetual and Finite American Binary Options};
\href{https://en.wikipedia.org/wiki/Binary_option\#American_style}{Closed Form Expressions for American Binary Options, Wikipedia Link}
\item Baklanova, V., Copeland, A., \& McCaughrin, R. (2015). Reference Guide
to US Repo and Securities Lending Markets (No. 740).
\item Baldi, F., \& Trigeorgis, L. (2015). Toward a Real Options Theory
of Strategic Human Resource Management. In Academy of Management Proceedings
(Vol. 2015, No. 1, p. 14862). Academy of Management.
\item Barraquand, J., \& Martineau, D. (1995). Numerical valuation of high
dimensional multivariate American securities. Journal of financial
and quantitative analysis, 30(03), 383-405.
\item Barraud, A. Y. (1981). More on the conversion problem of discrete—continuous
models. Applied Mathematical Modelling, 5(6), 414-416.
\item Battalio, R., \& Schultz, P. (2006). Options and the bubble. The Journal
of Finance, 61(5), 2071-2102.
\item Battalio, R., \& Schultz, P. (2011). Regulatory uncertainty and market
liquidity: The 2008 short sale ban's impact on equity option markets.
The Journal of Finance, 66(6), 2013-2053.
\end{doublespace}
\item Baxter, M., \& Rennie, A. J. (1996). Financial calculus: an introduction
to derivative pricing. Cambridge university press.
\item Bayer, P., \& Aklin, M. (2020). The European Union emissions trading
system reduced CO2 emissions despite low prices. Proceedings of the
National Academy of Sciences, 117(16), 8804-8812.
\begin{doublespace}
\item Belderbos, R., \& Zou, J. (2009). Real options and foreign affiliate
divestments: A portfolio perspective. Journal of International Business
Studies, 40(4), 600-620.
\item Benninga, S., Björk, T., \& Wiener, Z. (2002). On the use of numeraires
in option pricing. The Journal of Derivatives, 10(2), 43-58.
\item Black, F., \& Scholes, M. (1973). The Pricing of Options and Corporate
Liabilities. The Journal of Political Economy, 81(3), 637-654.
\item Bollen, N. P. (1999). Real options and product life cycles. Management
Science, 45(5), 670-684.
\item Boomsma, T. K., Meade, N., \& Fleten, S. E. (2012). Renewable energy
investments under different support schemes: A real options approach.
European Journal of Operational Research, 220(1), 225-237.
\end{doublespace}
\item Bormetti, G., Callegaro, G., Livieri, G., \& Pallavicini, A. (2018).
A backward Monte Carlo approach to exotic option pricing. European
Journal of Applied Mathematics, 29(1), 146-187. 
\begin{doublespace}
\item Boyle, P. P. (1977). Options: A monte carlo approach. Journal of financial
economics, 4(3), 323-338.
\item Boyle, P., Broadie, M., \& Glasserman, P. (1997). Monte Carlo methods
for security pricing. Journal of economic dynamics and control, 21(8),
1267-1321.
\item Brace, A., \& Musiela, M. (1997). The market model of interest rate
dynamics. Mathematical finance, 7(2), 127-155.
\item Brennan, M. J. (1979). The pricing of contingent claims in discrete
time models. The journal of finance, 34(1), 53-68.
\item Broadie, M., \& Detemple, J. (1997). The valuation of American options
on multiple assets. Mathematical Finance, 7(3), 241-286.
\item Broadie, M., \& Glasserman, P. (1997). Pricing American-style securities
using simulation. Journal of Economic Dynamics and Control, 21(8),
1323-1352.
\item Brockett, P., \& Xiaohua, X. (1997). Operations Research In Insurance:
A Review. Insurance Mathematics and Economics, 2(19), 154.
\item Brockman, P., \& Turtle, H. J. (2003). A barrier option framework
for corporate security valuation. Journal of Financial Economics,
67(3), 511-529.
\end{doublespace}
\item Callegaro, G., Fiorin, L., \& Grasselli, M. (2019). Quantization meets
Fourier: a new technology for pricing options. Annals of Operations
Research, 282(1), 59-86.
\begin{doublespace}
\item Campbell, J. Y., Lo, A. W., MacKinlay, A. C., \& Whitelaw, R. F. (1998).
The econometrics of financial markets. Macroeconomic Dynamics, 2(04),
559-562.
\item Carlsson, C., \& Fullér, R. (2001). On possibilistic mean value and
variance of fuzzy numbers. Fuzzy sets and systems, 122(2), 315-326.
\item Carr, P. (1995). Two extensions to barrier option valuation. Applied
Mathematical Finance, 2(3), 173-209.
\item Carr, P., \& Bowie, J. (1994). Static simplicity. Risk, 7(8), 45-50.
\item Carr, P. (1998). Randomization and the American put. Review of Financial
Studies, 11(3), 597-626.
\item Carr, P., Ellis, K., \& Gupta, V. (1998). Static hedging of exotic
options. The Journal of Finance, 53(3), 1165-1190.
\item Celly, K. S., Spekman, R. E., \& Kamauff, J. W. (1999). Technological
uncertainty, buyer preferences and supplier assurances: an examination
of Pacific Rim purchasing arrangements. Journal of International Business
Studies, 30(2), 297-316.
\item Chaabane, A., Ramudhin, A., \& Paquet, M. (2012). Design of sustainable
supply chains under the emission trading scheme. International Journal
of Production Economics, 135(1), 37-49.
\item Chen, C. Y., Wang, H. C., \& Wang, J. Y. (2015). The valuation of
forward-start rainbow options. Review of Derivatives Research, 18(2),
145-188.
\item Cheng, T. T. (1949). The normal approximation to the Poisson distribution
and a proof of a conjecture of Ramanujan. Bulletin of the American
Mathematical Society, 55(4), 396-401.
\end{doublespace}
\item Chi, T., Li, J., Trigeorgis, L. G., \& Tsekrekos, A. E. (2019). Real
options theory in international business. Journal of International
Business Studies, 50(4), 525-553.
\begin{doublespace}
\item Chib, S., \& Greenberg, E. (1995). Understanding the metropolis-hastings
algorithm. The American Statistician, 49(4), 327-335.
\item Chiani, M., Dardari, D., \& Simon, M. K. (2003). New exponential bounds
and approximations for the computation of error probability in fading
channels. Wireless Communications, IEEE Transactions on, 2(4), 840-845.
\item Choi, C. J., Lee, S. H., \& Kim, J. B. (1999). A note on countertrade:
contractual uncertainty and transaction governance in emerging economies.
Journal of International Business Studies, 30(1), 189-201.
\item Chung, S. L., \& Shih, P. T. (2009). Static hedging and pricing American
options. Journal of Banking \& Finance, 33(11), 2140-2149.
\item Chung, S. L., Shih, P. T., \& Tsai, W. C. (2013). Static hedging and
pricing American knock-in put options. Journal of Banking \& Finance,
37(1), 191-205.
\item Clewlow, L., \& Strickland, C. (1998). Pricing interest rate exotics
by Monte Carlo simulation. Monte Carlo: Methodologies and Applications
for Pricing and Risk Management.
\item Cobb, B. R., Rumi, R., \& Salmerón, A. (2012). Approximating the Distribution
of a Sum of Log-normal Random Variables. Statistics and Computing,
16(3), 293-308.
\item Cochrane, J. H. (2009). Asset Pricing:(Revised Edition). Princeton
university press.
\item Cowles, M. K., \& Carlin, B. P. (1996). Markov chain Monte Carlo convergence
diagnostics: a comparative review. Journal of the American Statistical
Association, 91(434), 883-904.
\item Cox, J. C. (1975). Notes on option pricing I: Constant elasticity
of variance diffusions. Unpublished note, Stanford University, Graduate
School of Business.
\item Cox, J. C., Ingersoll Jr, J. E., \& Ross, S. A. (1985). A theory of
the term structure of interest rates. Econometrica: Journal of the
Econometric Society, 385-407.
\item Cox, J. C. (1996). The constant elasticity of variance option pricing
model. The Journal of Portfolio Management, 23(5), 15-17.
\item Cuypers, I. R., \& Martin, X. (2010). What makes and what does not
make a real option? A study of equity shares in international joint
ventures. Journal of International Business Studies, 41(1), 47-69.
\item Dai, M., \& Kwok, Y. K. (2004). Knock-in American options. Journal
of Futures Markets, 24(2), 179-192.
\item Daskalakis, G., Psychoyios, D., \& Markellos, R. N. (2009). Modeling
CO2 emission allowance prices and derivatives: evidence from the European
trading scheme. Journal of Banking \& Finance, 33(7), 1230-1241.
\item D’avolio, G. (2002). The market for borrowing stock. Journal of financial
economics, 66(2), 271-306.
\item Derman, E., \& Kani, I (1994), Riding on a smile, Risk, Feb. pp. 32–39.
E. Derman and I. Kani, (1998), Stochastic implied trees: Arbitrage
pricing
\item Derman, E., Ergener, D., \& Kani, I. (1995). Static options replication.
The Journal of Derivatives, 2(4), 78-95.
\item Derman, E., \& Kani, I. (1998). Stochastic implied trees: Arbitrage
pricing with stochastic term and strike structure of volatility. International
journal of theoretical and applied finance, 1(01), 61-110.
\item Duffie, D., Garleanu, N., \& Pedersen, L. H. (2002). Securities lending,
shorting, and pricing. Journal of Financial Economics, 66(2), 307-339.
\item Dufresne, D. (2000). Laguerre series for Asian and other options.
Mathematical Finance, 10(4), 407-428.
\item Dufresne, D. (2001). The integral of geometric Brownian motion. Advances
in Applied Probability, 33(1), 223-241.
\item Dufresne, D. (2004). The log-normal approximation in financial and
other computations. Advances in Applied Probability, 36(3), 747-773.
\item Dupire, B. (1994). Pricing with a smile. Risk, 7(1), 18-20.
\item Dupire, B. (1997). Pricing and hedging with smiles (pp. 103-112).
Mathematics of derivative securities. Dempster and Pliska eds., Cambridge
Uni. Press.
\item Eberlein, E., \& Papapantoleon, A. (2005). Equivalence of floating
and fixed strike Asian and lookback options. Stochastic Processes
and their Applications, 115(1), 31-40.
\item Erdos, P., \& Hunt, G. A. (1953). Changes of sign of sums of random
variables. Pacific J. Math, 3, 673-687.
\item Evans, R. B., Geczy, C. C., Musto, D. K., \& Reed, A. V. (2009). Failure
is an option: Impediments to short selling and options prices. Review
of Financial Studies, 22(5), 1955-1980.
\item Fernandes, B., Cunha, J., \& Ferreira, P. (2011). The use of real
options approach in energy sector investments. Renewable and Sustainable
Energy Reviews, 15(9), 4491-4497.
\item Froot, K. A. (1995). Hedging portfolios with real assets. The Journal
of Portfolio Management, 21(4), 60-77.
\item Fusai, G. (2004). Pricing Asian options via Fourier and Laplace transforms.
Journal of Computational Finance, 7(3), 87-106.
\item Garlappi, L. (1996). Valuation of the American Put Option: A Dynamic
Programming Approach.
\item Gatheral, J. (2011). The volatility surface: a practitioner's guide
(Vol. 357). John Wiley \& Sons.
\item Geman, H., El Karoui, N., \& Rochet, J. C. (1995). Changes of numeraire,
changes of probability measure and option pricing. Journal of Applied
probability, 32(2), 443-458.
\item Gladwin, T. N., \& Walter, I. (1976). Multinational enterprise, social
responsiveness, and pollution control. Journal of International Business
Studies, 7(2), 57-74.
\item Glasserman, P. (2003). Monte Carlo methods in financial engineering
(Vol. 53). Springer Science \& Business Media.
\item Grant, D., Vora, G., \& Weeks, D. (1997). Path-dependent options:
Extending the Monte Carlo simulation approach. Management Science,
43(11), 1589-1602.
\item Grullon, G., Lyandres, E., \& Zhdanov, A. (2012). Real options, volatility,
and stock returns. The Journal of Finance, 67(4), 1499-1537.
\item Gujarati, D. N. (1995). Basic econometrics, 3rd. International Edition.
\item Hagan, P., Lesniewski, A., \& Woodward, D. (2015). Probability distribution
in the SABR model of stochastic volatility. In Large deviations and
asymptotic methods in finance (pp. 1-35). Springer International Publishing.
\item Hagan, P. S., Kumar, D., Lesniewski, A. S. and Woodward, D. E. (2002)
Managing smile risk, WILMOTT Magazine, September, pp. 84–108.
\item Hamilton, J. D. (1994). Time series analysis (Vol. 2). Princeton university
press.
\item Harrison, J. M. (1985). Brownian motion and stochastic flow systems
(pp. 89-91). New York: Wiley.
\item Hastings, W. K. (1970). Monte Carlo sampling methods using Markov
chains and their applications. Biometrika, 57(1), 97-109.
\item Haug, E. G. (2001). Closed form valuation of American barrier options.
International Journal of Theoretical and Applied Finance, 4(02), 355-359.
\item Haugh, M. B., \& Kogan, L. (2004). Pricing American options: a duality
approach. Operations Research, 52(2), 258-270.
\item Heath, D., Jarrow, R., \& Morton, A. (1992). Bond pricing and the
term structure of interest rates: A new methodology for contingent
claims valuation. Econometrica: Journal of the Econometric Society,
77-105.
\item Henderson, V., \& Wojakowski, R. (2002). On the equivalence of floating
and fixed-strike Asian options. Journal of Applied Probability, 39(2),
391-394.
\item Heston, S. L. (1993). A closed-form solution for options with stochastic
volatility with applications to bond and currency options. The review
of financial studies, 6(2), 327-343.
\item Ho, T. S., \& Lee, S. B. (1986). Term Structure Movements and Pricing
Interest Rate Contingent Claims. Journal of Finance, 41(5), 1011-29.
\item Hucki, Z., \& Kolokoltsov, V. N. (2007). Pricing of rainbow options:
game theoretic approach. International Game Theory Review, 9(02),
215-242.
\item Hui, C. H. (1996). One-touch double barrier binary option values.
Applied Financial Economics, 6(4), 343-346.
\item Hull, J., \& White, A. (1990). Pricing interest-rate-derivative securities.
Review of financial studies, 3(4), 573-592.
\item Hull, J., \& White, A. (1993). One-factor interest-rate models and
the valuation of interest-rate derivative securities. Journal of financial
and quantitative analysis, 28(2).
\item Hull, J. C. (2010). Options, Futures, and Other Derivatives, 7/e (With
CD). Pearson Education India.
\item Ingersoll, J. E. (1998). Approximating American options and other
financial contracts using barrier derivatives. Journal of Computational
Finance, 2(1), 85-112.
\item Ingersoll, Jr, J. E. (2000). Digital Contracts: Simple Tools for Pricing
Complex Derivatives. The Journal of Business, 73(1), 67-88.
\end{doublespace}
\item Jeong, D., Yoo, M., Yoo, C., \& Kim, J. (2019). A hybrid monte carlo
and finite difference method for option pricing. Computational Economics,
53(1), 111-124.
\begin{doublespace}
\item Jones, C. M., \& Lamont, O. A. (2002). Short-sale constraints and
stock returns. Journal of Financial Economics, 66(2), 207-239.
\item Karatzas, I., \& Wang, H. (2000). A barrier option of American type.
Applied Mathematics and Optimization, 42(3), 259-279.
\item Kashyap, R. (2016a). Securities Lending Strategies, Exclusive Valuations
and Auction Bids. Social Science Research Network (SSRN), Working
Paper.
\item Kashyap, R. (2016b). Securities Lending Strategies, TBR and TBR (Theoretical
Borrow Rate and Thoughts Beyond Rates). Social Science Research Network
(SSRN), Working Paper.
\item Kashyap, R. (2017). Notes on Uncertainty, Unintended Consequences
and Everything Else. Social Science Research Network (SSRN), Working
Paper.
\item Khouja, M. (1999). The single-period (news-vendor) problem: literature
review and suggestions for future research. Omega, 27(5), 537-553.
\item Kliger, D., \& Levikson, B. (1998). Pricing insurance contracts—an
economic viewpoint. Insurance: Mathematics and Economics, 22(3), 243-249.
\item Kou, S. G. (2002). A jump-diffusion model for option pricing. Management
science, 48(8), 1086-1101.
\item Lai, T. L., \& Xing, H. (2008). Statistical models and methods for
financial markets. New York: Springer.
\item Lambrecht, B., \& Perraudin, W. (2003). Real options and preemption
under incomplete information. Journal of Economic dynamics and Control,
27(4), 619-643.
\item Lee, S. H., \& Makhija, M. (2009). The effect of domestic uncertainty
on the real options value of international investments. Journal of
International Business Studies, 40(3), 405-420.
\item Levy, D. L. (1995). International sourcing and supply chain stability.
Journal of international business studies, 26(2), 343-360.
\item Litterman, R. B., \& Scheinkman, J. (1991). Common factors affecting
bond returns. The Journal of Fixed Income, 1(1), 54-61.
\end{doublespace}
\item Liu, X., \& Ronn, E. I. (2020). Using the binomial model for the valuation
of real options in computing optimal subsidies for Chinese renewable
energy investments. Energy Economics, 87, 104692.
\begin{doublespace}
\item Lo, C. F. (2012). The sum and difference of two lognormal random variables.
Journal of Applied Mathematics, 2012.
\end{doublespace}
\item Locatelli, G., Mancini, M., \& Lotti, G. (2020). A simple-to-implement
real options method for the energy sector. Energy, 197, 117226.
\begin{doublespace}
\item Longstaff, F. A., \& Schwartz, E. S. (2001). Valuing American options
by simulation: a simple least-squares approach. Review of Financial
studies, 14(1), 113-147.
\end{doublespace}
\item Loulianou, S. P., Leiblein, M. J., \& Trigeorgis, L. (2021). Multinationality,
portfolio diversification, and asymmetric MNE performance: The moderating
role of real options awareness. Journal of International Business
Studies, 52(3), 388-408.
\begin{doublespace}
\item Margrabe, W. (1978). The value of an option to exchange one asset
for another. The journal of finance, 33(1), 177-186.
\item Martello, S., \& Toth, P. (1987). Algorithms for knapsack problems.
Surveys in combinatorial optimization, 31, 213-258.
\item Martínez-Ceseña, E. A., \& Mutale, J. (2011). Application of an advanced
real options approach for renewable energy generation projects planning.
Renewable and Sustainable Energy Reviews, 15(4), 2087-2094.
\item Mascarenhas, B. (1982). Coping with uncertainty in international business.
Journal of International Business Studies, 13(2), 87-98.
\item Matloff, N. (2011). The art of R programming: A tour of statistical
software design. No Starch Press.
\item Merton, R. C. (1973). Theory of rational option pricing. The Bell
Journal of economics and management science, 141-183.
\item Merton, R. C. (1976). Option pricing when underlying stock returns
are discontinuous. Journal of financial economics, 3(1-2), 125-144.
\item Michaels, A., Close, A., Malmquist, D., \& Knap, A. (1997). Climate
science and insurance risk. Nature, 389(6648), 225-227.
\item Michel-Kerjan, E., \& Kunreuther, H. (2011). Redesigning flood insurance.
Science, 333(6041), 408-409.
\item Milevsky, M. A., \& Posner, S. E. (1998). Asian options, the sum of
lognormals, and the reciprocal gamma distribution. Journal of financial
and quantitative analysis, 33(03), 409-422.
\item Miller, K. D. (1992). A framework for integrated risk management in
international business. Journal of international business studies,
23(2), 311-331.
\item Mills, E. (2005). Insurance in a climate of change. Science, 309(5737),
1040-1044.
\item Mills, E. (2012). The greening of insurance. Science, 338(6113), 1424-1425.
\item Miranda, M. J., \& Fackler, P. L. (2002). Applied Computational Economics
and Finance.
\item Moraleda, J. M., \& Vorst, T. C. (1997). Pricing American interest
rate claims with humped volatility models. Journal of Banking \& Finance,
21(8), 1131-1157.
\end{doublespace}
\item Nadarajah, S., Margot, F., \& Secomandi, N. (2017). Comparison of
least squares Monte Carlo methods with applications to energy real
options. European Journal of Operational Research, 256(1), 196-204.
\item Narassimhan, E., Gallagher, K. S., Koester, S., \& Alejo, J. R. (2018).
Carbon pricing in practice: A review of existing emissions trading
systems. Climate Policy, 18(8), 967-991.
\begin{doublespace}
\item Nie, H., \& Chen, S. (2007). Lognormal sum approximation with type
IV Pearson distribution. IEEE Communications Letters, 11(10), 790-792.
\item Norstad, J. \textquotedbl The normal and lognormal distributions.\textquotedbl{}
(1999).
\end{doublespace}
\item Nunes, J. P. V., Ruas, J. P., \& Dias, J. C. (2020). Early exercise
boundaries for American-style knock-out options. European Journal
of Operational Research, 285(2), 753-766.
\begin{doublespace}
\item Oksendal, B. (2013). Stochastic differential equations: an introduction
with applications. Springer Science \& Business Media.
\item Ouwehand, P., \& West, G. (2006). Pricing rainbow options. Wilmott
magazine, 5, 74-80.
\item Paradis, E. (2002). R for Beginners.
\item Phillips, P. C., \& Yu, J. (2009). Maximum likelihood and Gaussian
estimation of continuous time models in finance. In Handbook of financial
time series (pp. 497-530). Springer Berlin Heidelberg.
\item Pirjol, D., \& Zhu, L. (2016). Discrete sums of geometric Brownian
motions, annuities and asian options. Insurance: Mathematics and Economics.
\item Qin, Y., Wang, R., Vakharia, A. J., Chen, Y., \& Seref, M. M. (2011).
The newsvendor problem: Review and directions for future research.
European Journal of Operational Research, 213(2), 361-374.
\item Quigg, L. (1993). Empirical testing of real option‐pricing models.
The Journal of Finance, 48(2), 621-640.
\item Reeb, D. M., Kwok, C. C., \& Baek, H. Y. (1998). Systematic risk of
the multinational corporation. Journal of International Business Studies,
29(2), 263-279.
\item Reuter, W. H., Szolgayová, J., Fuss, S., \& Obersteiner, M. (2012).
Renewable energy investment: Policy and market impacts. Applied Energy,
97, 249-254.
\item Rich, D. R. (1994). The mathematical foundations of barrier option-pricing
theory. Advances in Futures and Options Research, 7, 267-311.
\item Ritchken, P., \& Sankarasubramanian, L. (1995). Volatility structures
of forward rates and the dynamics of the term structure. Mathematical
Finance, 5(1), 55-72.
\item Rogers, L. C. G., \& Shi, Z. (1995). The Value of an Asian Option.
Journal of Applied Probability, 32(4), 1077-1088.
\item Rogers, L. C. (2002). Monte Carlo valuation of American options. Mathematical
Finance, 12(3), 271-286.
\item Rubin, J. D. (1996). A model of intertemporal emission trading, banking,
and borrowing. Journal of Environmental Economics and Management,
31(3), 269-286.
\item Rubinstein, M., \& Reiner, E. (1991). Breaking down the barriers.
Risk, 4(8), 28-35.
\item Rugman, A. M. (1976). Risk reduction by international diversification.
Journal of International Business Studies, 7(2), 75-80.
\item Shapiro, A., \& Homem-de-Mello, T. (2000). On the rate of convergence
of optimal solutions of Monte Carlo approximations of stochastic programs.
SIAM journal on optimization, 11(1), 70-86.
\item Shieh, L. S., Wang, H., \& Yates, R. E. (1980). Discrete-continuous
model conversion. Applied Mathematical Modelling, 4(6), 449-455.
\item Sherman, R. P., Ho, Y. Y. K., \& Dalal, S. R. (1999). Conditions for
convergence of Monte Carlo EM sequences with an application to product
diffusion modeling. The Econometrics Journal, 2(2), 248-267.
\item Shreve, S. E. (2004). Stochastic calculus for finance II: Continuous-time
models (Vol. 11). Springer Science \& Business Media.
\item Stentoft, L. (2004). Convergence of the least squares Monte Carlo
approach to American option valuation. Management Science, 50(9),
1193-1203.
\item Strike, V. M., Gao, J., \& Bansal, P. (2006). Being good while being
bad: Social responsibility and the international diversification of
US firms. Journal of International Business Studies, 37(6), 850-862.
\item Stulz, R. (1982). Options on the minimum or the maximum of two risky
assets: analysis and applications. Journal of Financial Economics,
10(2), 161-185.
\item Sung, S. W., Lee, J., \& Lee, I. B. (2009). Process identification
and PID control. John Wiley \& Sons.
\item Sweeney, J., \& Sweeney, R. J. (1977). Monetary theory and the great
Capitol Hill Baby Sitting Co-op crisis: comment. Journal of Money,
Credit and Banking, 9(1), 86-89.
\item Taleb, N. (2005). Fooled by randomness: The hidden role of chance
in life and in the markets (Vol. 1). Random House Incorporated.
\item Taleb, N. N. (2007). The black swan: The impact of the highly improbable.
Random house.
\item Tee, J., Scarpa, R., Marsh, D., \& Guthrie, G. (2014). Forest valuation
under the New Zealand emissions trading scheme: a real options binomial
tree with stochastic carbon and timber prices. Land Economics, 90(1),
44-60.
\item Thavaneswaran, A., Appadoo, S. S., \& Frank, J. (2013). Binary option
pricing using fuzzy numbers. Applied Mathematics Letters, 26(1), 65-72.
\item Tong, T. W., \& Reuer, J. J. (2007). Real options in multinational
corporations: Organizational challenges and risk implications. Journal
of International Business Studies, 38(2), 215-230.
\item Trigeorgis, L. (1993). The nature of option interactions and the valuation
of investments with multiple real options. Journal of Financial and
quantitative Analysis, 28(01), 1-20.
\item Trigeorgis, L. (2005). Making use of real options simple: An overview
and applications in flexible/modular decision making. The Engineering
Economist, 50(1), 25-53.
\end{doublespace}
\item Trigeorgis, L., \& Tsekrekos, A. E. (2018). Real options in operations
research: A review. European Journal of Operational Research, 270(1),
1-24.
\begin{doublespace}
\item Uboldi, A. (2016). Are derivatives introducing distortions in agricultural
markets?. Agricultural Markets Instability: Revisiting the Recent
Food Crises, 117-123.
\item Vasicek, O. (1977). An equilibrium characterization of the term structure.
Journal of financial economics, 5(2), 177-188.
\item Venables, W. N., Smith, D. M., \& R Development Core Team. (2016).
An introduction to R.
\end{doublespace}
\item Wang, C., \& Chen, X. (2017). Option pricing and coordination in the
fresh produce supply chain with portfolio contracts. Annals of Operations
Research, 248(1-2), 471-491.
\item Wen, F., Wu, N., \& Gong, X. (2020). China's carbon emissions trading
and stock returns. Energy Economics, 86, 104627.
\begin{doublespace}
\item West, G. (2005). Calibration of the SABR model in illiquid markets.
Applied Mathematical Finance, 12(4), 371-385.
\item Windcliff, H., Wang, J., Forsyth, P. A., \& Vetzal, K. R. (2007).
Hedging with a correlated asset: Solution of a nonlinear pricing PDE.
Journal of Computational and Applied Mathematics, 200(1), 86-115.
\item Xu, M. (2006). Risk measure pricing and hedging in incomplete markets.
Annals of Finance, 2(1), 51-71.
\item Yan, S. (2011). Jump risk, stock returns, and slope of implied volatility
smile. Journal of Financial Economics, 99(1), 216-233.
\end{doublespace}
\item Zanger, D. Z. (2018). Convergence of a least‐squares Monte Carlo algorithm
for American option pricing with dependent sample data. Mathematical
Finance, 28(1), 447-479.
\begin{doublespace}
\item Zhu, S. P. (2006). An exact and explicit solution for the valuation
of American put options. Quantitative Finance, 6(3), 229-242.
\item Zimmermann, H. J. (1996). Fuzzy Control. In Fuzzy Set Theory—and Its
Applications (pp. 203-240). Springer Netherlands.
\item Zvan, R., Vetzal, K. R., \& Forsyth, P. A. (2000). PDE methods for
pricing barrier options. Journal of Economic Dynamics and Control,
24(11), 1563-1590.
\end{doublespace}
\end{enumerate}
\begin{doublespace}

\section{\label{sec:Dictionary-of-Notation}Appendix A: Dictionary of Notation
and Terminology for Term Loans}
\end{doublespace}
\begin{itemize}
\begin{doublespace}
\item $H_{i}$, the quantity of the term deal for security $i$ for the
entire duration of the loan.
\item $A_{it}$, the total Availability on the stock in shares, at a particular
time, $t$, for security, $i$. This is the amount available to make
term loans and includes fresh supply reported from external lenders,
the existing borrow book positions, inventory from internal trading
desks and amounts taken and also unused holdings from exclusive arrangements. 
\item $B_{it}$, the Borrow book carried by the desk in shares, at a particular
time, $t$, for security, $i$. This is the existing amount borrowed
form external lenders.
\item $I_{it}$, the Internal Inventory the intermediary holds in shares,
at a particular time, $t$, for security, $i$.
\item $O_{it}$, the additional supply that can be sourced from beneficial
owners Other than exclusives in shares, at a particular time, $t$,
for security, $i$.
\item $E_{it}$, the amount taken out from the Exclusive pool in shares,
at a particular time, $t$, for security, $i$.
\item $P_{it}$, the holdings available in the exclusive Pool that is not
current drawn or is unused in shares, at a particular time, $t$,
for security, $i$.
\item $L_{it}$, the Loan book carried by the desk in shares, at a particular
time, $t$, for security, $i$. This is the existing amount of loans
to external borrowers.
\item $R_{it}$, the Rate on the loan charged by the intermediary, at a
particular time, $t,$ until the next time period, $t+1$, for security,
$i.$
\item $Q_{it}$, the borrow rate, an alternate rate to $R_{it}$, indicating
the cost at a particular time, t, until the next time period, $t+1$,
for security, $i$. This could be the rate at which supply from other
beneficial owners is sourced or could be a theoretical rate when no
rate from other beneficial owners is available. $Q_{it}<R_{it}$.
\item $q_{it}$, is the spread added to the borrow rate to form the loan
rate $R_{it}$.
\item $c,$ the constant factor governing the spread and the borrow rate.
\item $r$, the interest rate or other applicable rate.
\item $f$, the constant fee to utilize the exclusive holdings. This is
converted from the payment made for the duration of the exclusive
to apply on a daily basis.
\item $S_{it}$, the Security price at a particular time, $t,$ until the
next time period, $t+1$, for security, $i.$
\item $\upsilon,$ the Valuation of the term loan, for the duration extending
from $t=0\;to\;t=T.$
\item $w,$ the Valuation of the term loan expressed as a continuous compounded
interest rate, for the duration extending from $t=0\;to\;t=T.$
\item $T$, is the expiration time period or the total duration for which
the term loan will be contracted.
\item \textcolor{black}{$\left\{ \upsilon^{constant},\upsilon^{proportional\;time},\upsilon^{constant\;counter},\upsilon^{proportional\;counter},\right.$}
\end{doublespace}

\begin{doublespace}
\textcolor{black}{$\left.\upsilon^{stock\;holding},\upsilon^{borrow\;rate},\upsilon^{stochastic\;demand},\upsilon^{historical}\right\} $,
is the set of valuations.}
\end{doublespace}
\begin{doublespace}
\item $\beta=\frac{1}{\left(1+r\right)},$ is the discount factor, $r$
is the risk free rate of interest. Further complications can be introduced
by incorporating continuous time extensions to the short rate process. 
\item $n,$ the number of securities available in the term deal, $i\in\left\{ 1,\,...\,,n\right\} $.
This applies when we are looking at rainbows, baskets or other multi-security
structures.
\item $K_{i},\tilde{K}_{i},$ the constant payoffs on the term deal if the
barrier is breached for security $i$ in the downward and in the upward
direction.
\item $\theta_{ij},\lambda_{ij}$ are the scaling constants for security
$i$ in the partition $j$ for the availability and stock price time
series. These facilitate the use of the historical time series for
option valuation.
\item $p$ is the number of partitions of the historical time series for
option valuation.
\item $T_{js}$ and $T_{je}$ are the start and end times of the $j^{th}$
historical time series.
\item $D_{it}$, the Locate requests received, in shares, at a particular
time, $t$, for security, $i$.
\item $\delta_{it}\in\left[0,1\right],$ the conversion rate of locates
into borrows, at a particular time, $t$, for security, $i$. We can
simplify this to be the same per security.
\item $\delta_{i},$ the conversion rate of locates into borrows for security,
$i$. We can simplify this further to be a constant across time and
securities, $\delta$.
\item $\delta_{i}$$D_{it}$, then indicates the excess demand that the
desk receives, in shares, at a particular time, $t$, for security,
$i$.
\item $P$, the profits of the loan desk from the entire loan book over
the duration $T$.
\item $N,$ the number of trading intervals.
\item The length of each trading interval, $\tau=T/N$. We assume the time
intervals are of the same duration, but this can be relaxed quite
easily. In continuous time, this becomes, $N\rightarrow\infty,\tau\rightarrow0$.
\item The time then becomes divided into discrete intervals, $t_{k}=k\tau,\;k=0,...,N$.
We simplify this and write it as $t=0\;to\;t=T$ with unit increments.
\item It is common practice to consider daily increments in time for one
year. The fees paid generally also applies on weekends and holidays,
though there would be no change in any of the variables on these days.
Some firms use 252 trading days to annualize daily loan rates and
other fee terms. 
\item $\forall$ $a$, $b$ $\in\Re_{+}$, $\quad$$a^{+}=\max\left\{ a,0\right\} $,
$\quad$$a^{-}=\max\left\{ -a,0\right\} $, and$\quad$ $a\bigwedge b=\min\left\{ a,b\right\} $.
\end{doublespace}
\end{itemize}
\begin{doublespace}

\section{Appendix B: \label{subsec:Proof-of-Proposition:The-indicator-valuation-expression}Proof
of Proposition \ref{The-indicator-valuation-expression}}
\end{doublespace}
\begin{proof}
\begin{doublespace}
It is worth noting that $\text{Up}_{it}$ and $\text{Down}_{it}$
are mutually exclusive. Only one of them can be one in a given time
period. We consider the following four scenarios that can happen,
back to back, or in successive time periods. 
\[
\left[\left\{ \text{Up}_{i,t-1},\text{Up}_{it}\right\} \left\{ \text{Up}_{i,t-1},\text{Down}_{it}\right\} \left\{ \text{Down}_{i,t-1},\text{Down}_{it}\right\} \left\{ \text{Down}_{i,t-1},\text{Up}_{it}\right\} \right]
\]
Of the above scenarios, the following indicates the payoff incurred
correspondingly. There is a cashflow exchanged, when a state change
occurs either from Up to Down or from Down to Up.
\[
\left[\left\{ 0\right\} \left\{ K_{i}\right\} \left\{ 0\right\} \left\{ \tilde{K}_{i}\right\} \right]
\]
The above is equivalent to 
\[
\left[\left\{ \text{Up}_{it},\text{Up}_{i,t-1}\right\} \left\{ \text{Up}_{it},\text{Down}_{i,t-1}\right\} \left\{ \text{Down}_{it},\text{Down}_{i,t-1}\right\} \left\{ \text{Down}_{it},\text{Up}_{i,t-1}\right\} \right]\equiv\left[\left\{ 0\right\} \left\{ \tilde{K}_{i}\right\} \left\{ 0\right\} \left\{ K_{i}\right\} \right]
\]
The table below (Figure \ref{fig:Transacion-Cost-Table}) summarizes
the payoffs exchanged, based on the difference between variables across
successive time periods, when one of the four combinations occurs.
As an example, $\left\{ \text{Down}_{it},\text{Up}_{i,t-1}\right\} $
means that in time period $t-1$, security $i$ is in the $\text{Up State}$,
or $\text{Up}_{i,t-1}=1$ and in time period $t$ it is in the $\text{Down State}$,
or $\text{Down}_{it}=1$. Hence, when this combination occurs, we
have, $\left(\text{Down}_{it}-\text{Down}_{i,t-1}\right)=1$ and $\left(\text{Up}_{it}-\text{Up}_{i,t-1}\right)=-1$.

{\footnotesize{}}
\begin{figure}[H]
\begin{tabular}{|l|c|c|c|c|}
\hline 
 & $\left\{ \text{Up}_{it},\text{Up}_{i,t-1}\right\} $ & $\left\{ \text{Up}_{it},\text{Down}_{i,t-1}\right\} $ & $\left\{ \text{Down}_{it},\text{Down}_{i,t-1}\right\} $ & $\left\{ \text{Down}_{it},\text{Up}_{i,t-1}\right\} $\tabularnewline
\hline 
\hline 
$\left(\text{Up}_{it}-\text{Up}_{i,t-1}\right)$ & $0$ & $1$ & $0$ & $-1$\tabularnewline
\hline 
$\left(\text{Up}_{it}-\text{Down}_{i,t-1}\right)$ & $1$ & $0$ & $-1$ & $0$\tabularnewline
\hline 
$\left(\text{Down}_{it}-\text{Down}_{i,t-1}\right)$ & $0$ & $-1$ & $0$ & $1$\tabularnewline
\hline 
$\left(\text{Down}_{it}-\text{Up}_{i,t-1}\right)$ & $-1$ & $0$ & $1$ & $0$\tabularnewline
\hline 
\end{tabular}

\caption{{\footnotesize{}\label{fig:Transacion-Cost-Table}}Transaction Cost
Table}
\end{figure}
A downward transition occurs, whenever in the sequence of observations
for the variables in the $\text{Up State}$ we get a $\ldots10\ldots$,
that is, $\ldots\text{Up}_{it}\text{Up}_{i,t+1}\ldots\equiv\ldots10\ldots\mid t\in\left\{ 0,1,2,\ldots,T-1\right\} $.
Similarly, an upward transition occurs, whenever in the sequence of
observations for the variables in the $\text{Down State}$ we get
a $\ldots10\ldots$, that is, $\ldots\text{Down}_{it}\text{Down}_{i,t+1}\ldots\equiv\ldots10\ldots\mid t\in\left\{ 0,1,2,\ldots,T-1\right\} $.
An upward transition is also equivalent to getting a $\ldots01\ldots$
for the variables in the $\text{Up State}$, that is, $\ldots\text{Up}_{it}\text{Up}_{i,t+1}\ldots\equiv\ldots01\ldots\mid t\in\left\{ 0,1,2,\ldots,T-1\right\} $.
From this, we can count the number of times a $\text{Up State}$ to
$\text{Down State}$ transition happens as,
\begin{eqnarray*}
\text{Number of Downward Transitions} & = & \left|\left\{ t\in\left\{ 1,2,\ldots,T-1\right\} \mid\text{Up}_{it}=1\text{ and }\text{Down}_{i,t+1}=1\right\} \right|\\
 & = & E_{0}\left[\sum_{i=1}^{n}\sum_{t=0}^{T-1}\left\lfloor \frac{1+\text{Up}_{it}-\text{Up}_{i,t+1}}{2}\right\rfloor \right]
\end{eqnarray*}
\begin{eqnarray*}
\text{Number of Upward Transitions} & = & \left|\left\{ t\in\left\{ 1,2,\ldots,T-1\right\} \mid\text{Down}_{it}=1\text{ and }\text{Up}_{i,t+1}=1\right\} \right|\\
 & = & E_{0}\left[\sum_{i=1}^{n}\sum_{t=0}^{T-1}\left\lfloor \frac{1-\text{Up}_{it}+\text{Up}_{i,t+1}}{2}\right\rfloor \right]
\end{eqnarray*}
Here, $\left|\left\{ A\right\} \right|$ indicates the cardinality
of the set $A$. The valuation expressions are given below, keeping
in mind that if we start in the $\text{Down State}$ state a cashflow
applies.

\begin{eqnarray*}
\upsilon^{constant\;counter} & = & E_{0}\left[K_{i}\sum_{i=1}^{n}\left(1-\text{Up}_{i0}\right)+K_{i}\sum_{i=1}^{n}\sum_{t=0}^{T-1}\left\lfloor \frac{1+\text{Up}_{it}-\text{Up}_{i,t+1}}{2}\right\rfloor \right.\\
 &  & \left.-\tilde{K}_{i}\sum_{i=1}^{n}\sum_{t=0}^{T-1}\left\lfloor \frac{1-\text{Up}_{it}+\text{Up}_{i,t+1}}{2}\right\rfloor \right]
\end{eqnarray*}
\begin{eqnarray*}
\upsilon^{proportional\;counter} & = & E_{0}\left[K_{i}T\sum_{i=1}^{n}\left(1-\text{Up}_{i0}\right)+K_{i}\sum_{i=1}^{n}\sum_{t=0}^{T-1}\left(T-t-1\right)\left\lfloor \frac{1+\text{Up}_{it}-\text{Up}_{i,t+1}}{2}\right\rfloor \right.\\
 &  & \left.-\tilde{K}_{i}\sum_{i=1}^{n}\sum_{t=0}^{T-1}\left(T-t-1\right)\left\lfloor \frac{1-\text{Up}_{it}+\text{Up}_{i,t+1}}{2}\right\rfloor \right]
\end{eqnarray*}
\end{doublespace}
\end{proof}

\end{document}